\newtheorem{property}{Property}
\newtheorem{ppst}{Proposition}
\newtheorem{theorem}{Theorem}
\newcommand{\rev}[1]{{\color{blue}#1}} 
\newcommand{\com}[1]{\textbf{\color{red} (COMMENT: #1)}} 
\newcommand{\comg}[1]{\textbf{\color{green} (COMMENT: #1)}}
\newcommand{\response}[1]{\textbf{\color{magenta} (RESPONSE: #1)}} 
\newcommand{\rev}[1]{#1}
\newcommand{\com}[1]{}
\newcommand{\comg}[1]{}
\newcommand{\response}[1]{}
\begin{document}
\title{Joint Multi-User DNN Partitioning and Computational Resource Allocation for Collaborative Edge Intelligence}

 \author{Xin Tang,
         Xu Chen,
         Liekang Zeng,
         Shuai Yu,
         and~Lin Chen
         \\School of Data and Computer Science, Sun Yat-sen University, Guangzhou, China
 }

\maketitle
\begin{abstract}
  Mobile Edge Computing (MEC) has emerged as a promising supporting architecture providing a variety of resources to the network edge, thus acting as an enabler for edge intelligence services empowering massive mobile and Internet of Things (IoT) devices with AI capability.
  With the assistance of edge servers, user equipments (UEs) are able to run deep neural network (DNN) based AI applications, which are generally resource-hungry and compute-intensive, such that an individual UE can hardly afford by itself in real time.
  However the resources in each individual edge server are typically limited. Therefore, any resource optimization involving edge servers is by nature a resource-constrained optimization problem and needs to be tackled in such realistic context.
  Motivated by this observation, we investigate the optimization problem of DNN partitioning (an emerging DNN offloading scheme) in a realistic multi-user resource-constrained condition that rarely considered in previous works.
  Despite the extremely large solution space, we reveal several properties of this specific optimization problem of joint multi-UE DNN partitioning and computational resource allocation.
  We propose an algorithm called Iterative Alternating Optimization (IAO) that can achieve the optimal solution in polynomial time.
  In addition, we present rigorous theoretic analysis of our algorithm in terms of time complexity and performance under realistic estimation error.
  Moreover, we build a prototype that implements our framework and conduct extensive experiments using realistic DNN models, whose results demonstrate its effectiveness and efficiency.
\end{abstract}
\begin{IEEEkeywords}
  Mobile edge computing, DNN partitioning, Computation offloading, Computational resource allocation.
\end{IEEEkeywords}

\section{Introduction}\label{sec:introduction}
\IEEEPARstart{D}{riven} by the breakthroughs in deep learning, recent years have witnessed a booming of artificial intelligence (AI) applications and services, ranging from face recognition~\cite{deepfacereco}, video analytics~\cite{surveillancesurvey} to natural language processing~\cite{nlpsurvey}. In the meantime, with the proliferation of mobile Internet and Internet of Things (IoT), a large number of mobile and IoT devices are deployed at the network edge and generate a huge amount of data~\cite{edgesurvey,iotdata}.
To fully unleash the potential of these mobile and IoT big data, there is an urgent need to push the AI capability to the network edge for real-time data processing. In this context, edge AI or edge intelligence is emerging as a promising paradigm to fulfill such demand~\cite{edgeai}.

Due to its great potential, edge intelligence is starting to attract extensive research attention~\cite{coin, ei1, ei2}. By providing a variety of resources to user equipments (UEs) in close proximity~\cite{edgesurvey,ec}, edge servers are able to assist UEs in running resource-hungry applications. However, running one of the most fundamental AI tools in real time at the network edge, deep neural network (DNN), is still a challenging task.

\rev{There are mainly two kinds of approaches to deploy a DNN model on mobile and IoT devices:}

\rev{ 1) Leveraging model compression mechanism to reduce the model trading accuracy for less computation, such as model pruning~\cite{nest,DBLP:conf/nips/GuoYC16,DBLP:journals/corr/HanMD15,DBLP:conf/nips/HanPTD15} and weights/activations quantization~\cite{DBLP:conf/nips/CourbariauxBD15,DBLP:journals/corr/CourbariauxB16,DBLP:conf/icml/GuptaAGN15,DBLP:conf/nips/HubaraCSEB16}.
Generally, the accuracy of a compressed model is not geranteed theoretically. Besides, model compression methods are model-specific that highly depending on the architecture of DNN~\cite{mednn}. }

\rev{ 2) Deploying DNN models in a distributed manner by offloading part or entire computation tasks onto other devices to reduce the local workload~\cite{neurosurgeon,jdnn,hermes,DBLP:conf/ccece/ShahzadS16,energyoptimal}. It's worth noting that model compression tricks can be jointly applied together with offloading mechanisms (e.g.~\cite{mednn,coin}).}

As a matter of fact, traditional computation offloading schemes of edge computing can hardly deal with DNN-related tasks in real time due to its large data volume and high computation overhead. To address the above challenge, DNN partitioning~\cite{neurosurgeon,jdnn,coin} is proposed, which, instead of offloading the entire task onto an edge server, handles a neural network model as a sequential connection of several independent layers and executes them separately on multiple devices.
By carefully configuring layers, DNN partitioning can limit the communication between the UEs and their serving servers, thus reducing the corresponding latency.

Frameworks like Edgent\cite{coin} and Neurosurgeon\cite{neurosurgeon} choose to use a binary DNN partitioning scheme. Eshratifar \textit{et al.} goes one step further such that the execution of each layer can be independently decided to be placed at either the local device or cloud~\cite{jdnn}. The propositions in \cite{hermes,dyoff,energyoptimal} apply general offloading mechanisms for tasks with specific structural dependency like DNN Models. However, all of these works focus on the single-user case.

When it comes to the more general multi-user case, computation offloading needs to be handled jointly with resource allocation among users. In this regard, there exists a handful of propositions (cf.~\cite{offloadinggame, cooperativemanage, yang, zhao, jointo}) on the multi-user computation offloading and resource allocation for general tasks. However, most of them only consider offloading the \textit{entire task} to edge or cloud servers, or optimizing an offloading ratio of several tasks.
As for the multi-user multi-level offloading, Du~\textit{et al.} discusses joint resource management for multiple devices, but only a fixed number of offloading decisions can be supported such that the whole task can be either offloaded to an edge or cloud server~\cite{minmax}.
\cite{nestdnn} adopts the idea of DNN filter pruning for dynamic model resizing in order to reduce the total computation offloading cost. However, since filter pruning would damage the inference accuracy, a good balance between cost reduction and accuracy loss is difficult to derive.

\emph{To the best of our knowledge, none of the previous work has considered a joint multi-user DNN partitioning based multi-level offloading and the related computational resource allocation problem, which is the focus of our work.}

Motivated by the above observation, we embark in this paper on the study of joint multi-UE DNN partitioning and the related edge computational resource allocation.
Our long-term vision is to build an intelligent edge that can satisfy vast demands on DNN-based edge AI applications from heterogeneous UEs. More specifically, multiple UEs cooperate with a resource-constrained edge server, where each UE can make a DNN partitioning decision on its own DNN model and the edge server efficiently allocates its computational resources to different UEs to accelerate the execution of their offloaded DNN layers at the edge server. \rev{For many edge intelligence applications such as smart manufacturing, intelligent transportation and unmanned aerial vehicles,  boosting the real-time performance is a key primary metric to pursuit particularly when there are multiple devices that run concurrent DNN tasks and compete for the limited computing resources on the edge server. Thus, in this study we aim to minimize the maximum DNN execution latency among all the devices, in order to reduce the global latency and enhance the system-wide real-time performance.} We develop a resource allocation framework minimizing the maximum latency of the DNN executions among the UEs.

The main contributions of our work are articulated as follows.

\textbf{Problem formulation and framework}. We investigate the multi-user edge intelligence application scenario and develop a framework of joint DNN partitioning and computational resource allocation. We advocate using data-driven correction function for more accurate multi-core computing capability modeling, based on which  we formulate a max-min optimization problem.

\textbf{Algorithm design and analysis}. We analyze the structural properties of the multi-user joint DNN partitioning and computational resource allocation problem, based on which we further develop an algorithm called Iterative Alternating Optimization (IAO) that can achieve the optimal solution in polynomial time. We present rigorous theoretic analysis of our algorithm in terms of time complexity and performance bound under realistic estimation error.

\textbf{Prototyping and experiment}. We build a prototype that implements our algorithm. We implement several popular DNN models with heterogeneous UEs including Raspberry Pis and NVIDIA Jetson Nanos. We conduct extensive experiments whose results demonstrate the effectiveness and efficiency of our approaches over existing benchmarks.

The rest of this paper is organized as follows.
In Section~\ref{section:systemmodel}, we introduce the basic concepts of DNN partitioning and formulate this problem as an optimization problem.
In Section~\ref{section:algorithmdesign}, we propose an iterative alternating optimization algorithm to efficiently solve this problem.
Besides, detailed analysis and proofs are given in this section.
In Section~\ref{section:evaluation}, we evaluate the performance of our framework on a realistic prototype with extensive experiments.
Additionally, we analyze the efficiency of our algorithms against the increase of the problem size.
Finally, we draw conclusions in Section~\ref{section:conclusion}.

\begin{figure}[t]
    \centerline{\includegraphics[width=9.5cm]{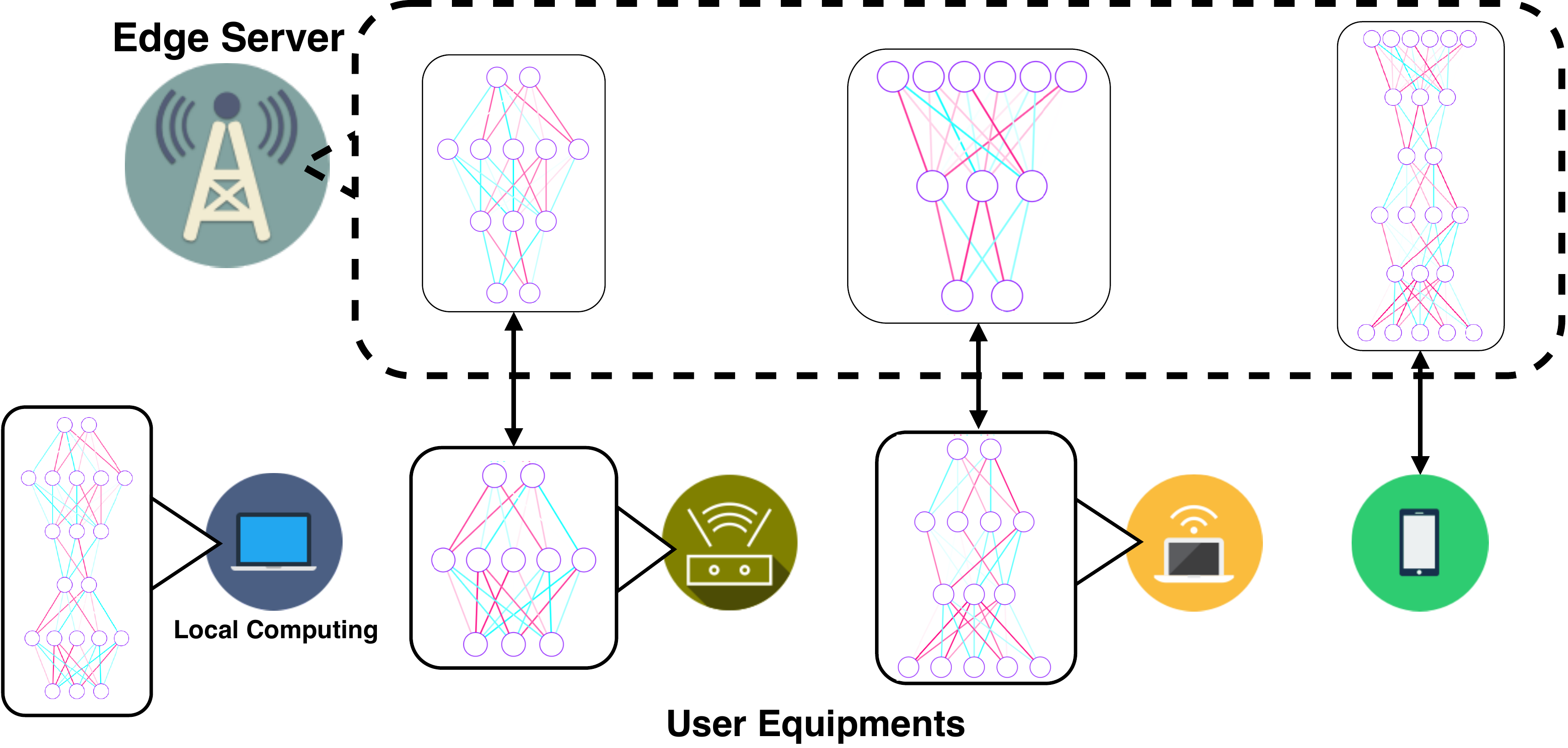}}
    \caption{Illustration of DNN Partitioning: different UEs can choose different partition patterns for DNN task offloading.}
    \label{fig:dnnpartitioning}
    \end{figure}


\section{System Model and Problem Formulation}\label{section:systemmodel}

We consider that a set $\mathcal{N}$ of UEs executing their DNN inference tasks with the assistance of a resource-constrained edge server in proximity.
As shown in Fig.~\ref{fig:dnnpartitioning}, each UE is equipped with a pre-trained DNN model.
To accelerate the execution process of a DNN inference task, each UE is able to make a layer-level offloading decision using a mechanism called \textbf{DNN Partitioning} (will be discussed in the next subsection).
A replicate of virtual server, which can be a container or virtual machine hosted in the edge server, equipped with a fixed amount of computational resources, can assist its associated UEs by allocating computational resources of the edge server and executed concurrently the offloaded tasks. Tab.~\ref{tab:notation} lists the main notations used in this paper.

\rev{Note that as an initial thrust towards efficient algorithm design for multi-user edge intelligence, in this paper we focus on the computational resource allocation of the edge server, which would be the key bottleneck resource for supporting computationally-intensive DNN tasks by multiple users. The joint allocation of multi-dimensional resources consisting of computing, communication and energy would be much more challenging due to the combinatorial nature of the resource allocation decision spaces. Nevertheless, in practice to avoid heavy overhead due to frequent system configurations,  the allocation of other resources such as bandwidth and energy is typically carried out at a slower time-scale than that of computational resource allocation, and hence the proposed algorithms in this paper can also be useful for an integrated system resource allocation framework design in the future work.}

\begin{table}
\caption{Main notations}
\vspace{-0.8em}
\begin{tabular}{p{0.5cm} p{7.5cm} }
\toprule
$\mathcal{N}$ & UE set, $n=|{\mathcal{N}}|$ \\
$s_i$& DNN partitioning and offloading decision variable of UE $i$\\
$k_i$& number of logical layers of DNN inference task at UE $i$\\
$C_i^D$& computational capability of UE $i$\\
$C_{min}$& Minimum Computational Resource Unit (MCRU)\\
$\beta$& total number of MCRU on the edge server\\
$f_i$& number of assigned MCRU for each UE $i$\\
$B_i^{ul}$&upload bandwidth of UE $i$ \\
$B_i^{dl}$& download bandwidth of UE $i$ \\
$X_{i,s_i}$ & amount of computation before separation point $s_i$ at UE $i$\\
$Y_{i,s_i}$ & amount of computation after separation point $s_i$ at UE $i$\\
$M_{i,s_i}$ & intermediate output after separation point $s_i$ at UE $i$\\
$p$ & decremental factor for IAO-DS\\
\bottomrule
\end{tabular}
\label{tab:notation}
\end{table}

\subsection{DNN Partitioning}\label{dnnpartitioning}
Generally, most modern DNN models are constructed by several basic elements called ``layers''. The identity of a ``layer'' represents a combination of computations with similar effort toward a set of input (e.g., the convolution layer in CNN).

In this paper, we consider layers as the atomic elements of a DNN model. To deal with different DNN architectures, we first abstract them as sequential computational graphs by considering parallel layers or layers with shortcut connections as a composite entity called logical layer (e.g., a residual block, as shown in Fig.~\ref{logical}).
Thus, a DNN model is treated as a sequential connection of several logical layers.
To partition the computation of a DNN model, we divide the logical layers into two categories, those executed locally, and those executed on the edge server.
Only intermediate outputs are transferred from a UE to its edge server\footnote{Model compressions and parameter pruning are also widely used to reduce the size of edge AI applications, and DNN partitioning can be applied based on the compressed or pruned edge AI models as well.}.

Specifically, the DNN model of each UE $i$ is composed of $k_i$ logical layers.
An offloading decision at UE $i$ can thus be modeled with an integer variable $s_i \in \{0,1,2,...,k_i\}$, indicating that the layers $0$ to $s_i$ are executed locally while the rest of layers are offloaded to the edge server. The two degenerated cases $s_i = k_i$ and $s_i=0$ correspond to the cases where the entire computation task is run locally at the UE and offloaded to the edge server, respectively.

\begin{figure}[t]
    \centerline{\includegraphics[width=1.8in]{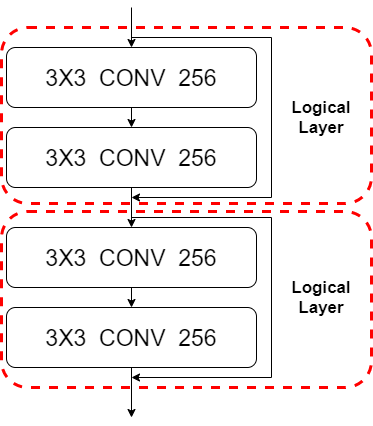}}
    \caption{Abstraction of residual block consisting of multiple layers as a logical layer.}
    \label{logical}
\end{figure}

\subsection{DNN Execution Latency Model}\label{lp}
Once the DNN partitioning decision is made, the execution time for a UE is composed of four parts, which are described as follows.

\subsubsection{Execution Time on UE}
Let $X_{i,s_i}$ denote the amount of the local computation of DNN model before separation point $s_i$ of UE $i$; let $C_i^D$ denote the computational capability of UE $i$.
The local execution time is given by:
$$t_{local}=\frac{X_{i,s_i}}{C_i^D}.$$

\subsubsection{Intermediate Output Transmission Time}
Let $M_{i,s_i}$ denote the data size of intermediate output to be transferred after DNN partitioning; let $B_i^{ul}$ denote the upload bandwidth of UE $i$.
The transmission time of intermediate output to the offloaded edge server is given by:
$$t_{upload}=\theta(k_i-s_i)\frac{M_{i,s_i}}{B_i^{ul}},$$
where $\theta(x)$ is the unit step function defined as follows:
\[
\theta(x)=\left\{
\begin{array}{rcl}
1 & & {x > 0,}\\
0 & & {x \leq 0.}\\
\end{array} \right .
\]
When $s_i\le k_i$, the inference is executed locally without any communication between UE and the edge server, i.e., $t_{upload}=0$.

\subsubsection{Execution Time on Edge Server}

It is generally assumed in the literature that the task execution speeds scales linearly w.r.t. the amount of allocated computation resource. Hence, the execution time is inversely proportional to the amount of allocated computational resource.
\begin{figure}[t]
    \centerline{\includegraphics[width=7cm]{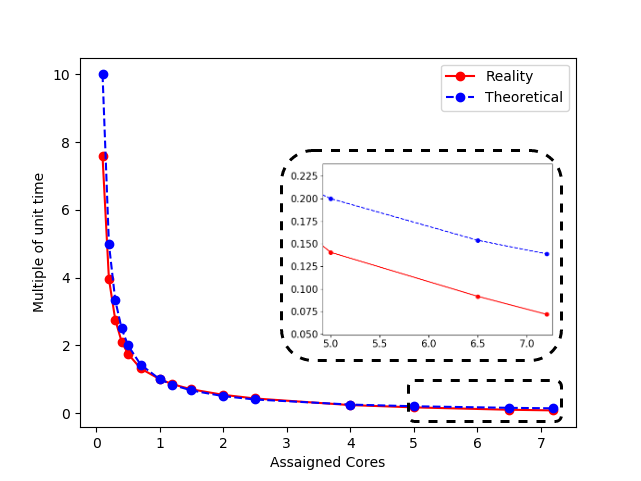}}
    \caption{The theoretical and real execution time of multiple cores using a VGG19 inference task and Intel$^\circledR$ Xeon$^\circledR$ W-2145 CPU. The result illustrates the non-linearity in multicore concurrent efficiency.}
    \label{fig:linearity}
    \end{figure}

However, this generic assumption is no longer valid for the DNN inference tasks, particularly with multi-core CPUs.
To verify this, we carry out experiments by executing a standard DNN inference task on docker containers\cite{docker}, which are widely applied in server virtualization,  with different numbers of CPU cores.
As shown in Fig. \ref{fig:linearity}, there is up to $44\%$ error in execution time between theory and experiment (when there are 7.2 cores).

In our model, we apply a data-driven approach to mitigate the above gap in the case of multi-core CPUs. Our approach is based on the practical model with discrete computational resources. Let $C_{min}$ denote the minimum computational resource unit (MCRU). Let $\beta$ denote the total number of MCRU. Let $f_i$ denote the number of MCRU allocated to UE $i$. Note that $\sum_{i\in {\mathcal{N}}} f_i \le \beta$.

To model the non-linearity in the execution time, we express the execution time on edge as
$$t_{edge}=\theta(k_i-s_i)\frac{Y_{i,s_i}}{\gamma(f_i)C_{min}},$$
where $Y_{i,s_i}$ denotes the amount of computation offloaded by UE $i$ to the edge after separation point $s_i$, $\gamma(f)$ is a compensation function we introduce to fit the execution time in the multi-core CPU case\footnote{For the ease of implementation, we focus on the multi-core CPU case in this study. Nevertheless, similar method can be also applied in the GPU case.}. Note that in the single-core case, $\gamma(f)$ is degenerated to $f$. In practice, we can conduct data-driven fitting to obtain $\gamma(f)$. A common practice is to gather a large samples of runtime profiles of the edge server and then carry out nonlinear regression-based (e.g., regression trees \cite{hutter2014algorithm}) estimation for $\gamma(f)$. In the following algorithm design and analysis, we only assume that the compensation function $\gamma(f)$ (equivalently, the  effective computing capability $\gamma(f)C_{min}$) increases with the computational resource $f$.
\begin{figure}[t]
    \centerline{\includegraphics[width=7cm]{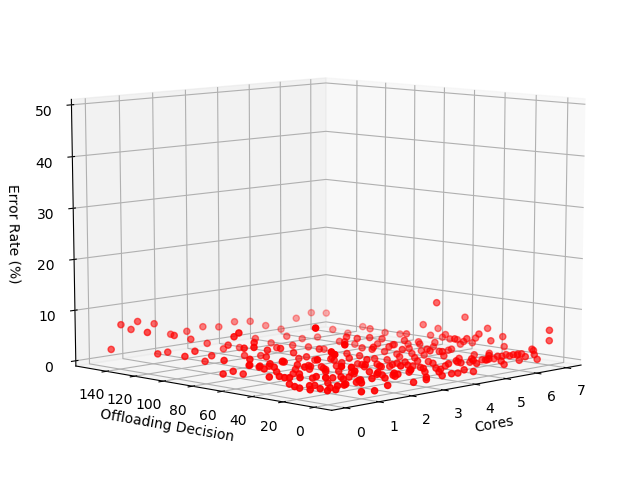}}
    \caption{Performance of Latency Model investigated by a MobilenetV2 instance}
    \label{fig:predict}
    \end{figure}

\subsubsection{Final Result Transmission Time}
Similar to the transmission time of intermediate output, we can derive the transmission time of the final results as
$$t_{download}=\theta(k_i-s_i)\frac{M_{i,{k_i}}}{B_i^{dl}},$$
where $M_{i,{k_i}}$ denotes the size of the final results of offloaded DNN, $B_i^{dl}$ denotes the down-link bandwidth.

By combining the above analysis, we can derive the overall latency of each UE $i$ as follows:

\begin{multline}
T_i(s_i,f_i)=\underbrace{\frac{X_{i,s_i}}{C_i^D}}_{Local} + \\
\theta(k_i-s_i)\left[\underbrace{\frac{M_{i,s_i}}{B_i^{ul}}}_{Upload} +\underbrace{\frac{Y_{i,s_i}}{\gamma(f_i)C_{min}}}_{on\ Edge} +\underbrace{\frac{M_{i,{k_i}}}{B_i^{dl}}}_{Download}\right].
\label{eq:M1}
\end{multline}

We conclude this subsection by checking our model on the DNN execution latency in real-world environment. Our experiment results, shown in Fig. 4, demonstrate that: 1) the average relative error between the latency estimated by our model and the actual latency (i.e. $|T^{estimated}-T^{actual}|/T^{actual}$) is $2.121\%$; 2) $92.5\%$ of estimation samples have an error less than $5\%$\footnote{The performance of our model may vary with different DNN models, but our empirical results show limited estimation error in all the architectures we tested.}.

\subsection{Problem Formulation}\label{section:problemformulation}
To fulfill the demand of the real-time capability, we aim to optimize the worst-case latency of all UEs in a min-max fashion.
Therefore the global utitlity function we aim to minimize is defined as:
$$U(S,F) \triangleq \max_{i \in \mathcal{N}} \ T_i(s_i,f_i),$$
where $S$ is the vector of partitioning decisions, i.e. $S\triangleq (s_1,s_2,\cdots,s_n)$, $F$ denotes the vector of computational resources allcation, i.e. $F\triangleq (f_1,f_2,\cdots,f_n)$, the execution latency $T_i(s_i,f_i)$ is given by~\eqref{eq:M1}.

And our problem can be formulated as follows:
\begin{align}
\mathbf{P} : \quad &  \min U(S,F)
\nonumber  \\
 & \sum_{i \in \mathcal{N}} f_i = \beta, \label{eq:c2} \\
 & (1-f_i)k_i\leq s_i \leq k_i \quad \forall i \in \mathcal{N}, \label{eq:c3}\\
 & f_i, \ s_i\ \in \mathbb{N} \quad  \forall i \in \mathcal{N}, \label{eq:c4}
\end{align}
where \eqref{eq:c2} is the constraint on the total number of computational resources on the edge server, \eqref{eq:c4} indicates that $f_i$ and $s_i$ are non-negative integers, \eqref{eq:c3} establishes the constraint on $s_i$ such that (i) $s_i$ is no more than $k_i$, (ii) if the edge server does not assign resource to UE $i$ (i.e., $f_i=0$), $s_i=k_i$, meaning that the entire computation task should be done locally.

It's worth noting that there are approximately $k^n*\frac{(\beta+n-1)!}{(n-1)!\beta!}$ possible solutions in decision space, which is extremely large.
Besides, the generalized problem of \textbf{P} (with arbitrary utility function) is a typical multi-choice knapsack problem (by considering the utility as profit and the assigned resource as weight), which is well known to be NP-hard.
As for our specific utility function, noticing (\ref{eq:M1}) is a nonlinear function, our problem \textbf{P} is by nature a nonlinear integer optimization problem which is notoriously intractable in general.

In the next section, we investigate the structural properties of our utility function and propose an algorithm that can solve \textbf{P} optimally in polynomial time.

\section{Solving P: Iterative Alternating Optimization}\label{section:algorithmdesign}

In this section, we present our proposition, termed as Iterative Alternating Optimization (IAO), to solve \textbf{P}. We first reveal several structural properties of our problem. We then describe our IAO algorithm in detail. Using the established structural properties, we further prove the optimality of our algorithm and analyze its complexity. We complete this section by investigating the impact of estimation error.


\subsection{Structural Properties of \textbf{P}}\label{subsection:properties}

\begin{property}\label{property1}
Given a fixed number of resources $f_i$, we can derive the optimal solution $s_i$ minimizing $T_i(s_i,f_i)$.\end{property}

Since the partitioning decision space is discrete, we can always obtain the individual optimal solution by searching all possible $s_i$ in $O(k_i)$ time, given the fixed assigned computational resource $f_i$.

In the second property we establish below, we consider the case where the resource allocated to UE $i$, $f_i$, is given, and define the corresponding optimal partitioning decision as
$$s_i^*(f_i)\triangleq \arg\min_{s_i} \ T_i(s_i,f_i).$$

\begin{property}\label{property2}
It holds that $T_i(s_i^*,f_i)$ is monotonically non-increasing in $f_i$, i.e.,

$$ f_i^1\ge f_i^2 \longrightarrow T_i(s_i^*(f_i^1),f_i^1) \le T_i(s_i^*(f_i^2),f_i^2) \ \forall i \in \mathcal{N}.$$
\end{property}
\begin{proof}\label{proof:property2}
See Appendix \ref{appendix:property2}.
\end{proof}

Property~\ref{property2} is intuitive to understand by reflecting the common sense that ``more resources do no harm''.

\subsection{The Iterative Alternating Optimization Algorithm}\label{section:algorithm}
The rationale of our iterative alternating optimization algorithm to is that the min-max objective function of \textbf{P} will only benefit from the improvement of the UE with the worst utility. Therefore, we iteratively adjust the allocation of computational resources by reallocating the computational resources to the UE with the worst utility.
Hopefully we can reach an equilibrium where no further improvement can be made, which corresponds to the system optimum.

The pseudo-code of our IAO algorithm is illustrated in Alg.~\ref{alg:1}. We first set the adjustment stepsize denoted by $\tau$, the quantity of computing resource the algorithm adjusts in each iteration. We then choose a random resource allocation vector as the initial strategy and compute $s_i^*$ based on $f_i$. As analyzed in Section \ref{subsection:complexity}, if we have certain prior knowledge, we can select better initial strategy which may reduce the convergence time.

The core part of our algorithm is the main loop, which can be decomposed into three steps. In the first step, we check each UE $j$ if its allocated resource $f_j$ is small than the stepsize and if its latency exceeds the maximum latency among all UEs and mark it as exhausted if yes. If all the UEs are exhausted, indicating we cannot be better off by conducting any resource adjustment, we terminate the algorithm by outputting the current allocation vector. Otherwise we move $\tau$ quantity of resource from a non-exhausted UE with the lowest latency after adjustment to the UE with the highest latency. We then update the related latencies of the touched UEs. The above loop terminates if all the UEs are exhausted, meaning that no UE can improve the maximum latency among UEs by snapping resource from others.


\begin{algorithm}[t]
    \caption{Iterative Alternating Optimization (IAO)}
    \label{alg:1}
    \begin{algorithmic}[1]
        \STATE set adjustment stepsize $\tau\leftrightarrow 1$
        \STATE choose a random resource allocation vector $(S,F)$

        \FOR{each $i\in{\mathcal{N}}$}
            \STATE $s_i^* \leftarrow \arg\min_{s_i} \ T_i(s_i,f_i)$
        \ENDFOR

        \LOOP
            \STATE $L_{max}\leftarrow \max_{i\in {\mathcal{N}}} T_i(s_i^*,f_i)$

            \FOR{each $j\in{\mathcal{N}}$}
                \IF{$f_j-\tau<0$}
                    \STATE mark UE $j$ as exhausted
                    \STATE \textbf{continue}
                \ENDIF
                \STATE $s_j' \leftarrow \arg\min_{s_j} \ T_j(s_j,f_j-\tau)$
                \IF{$T_j(s_j',f_j-\tau)\ge L_{max}$}
                    \STATE mark UE $j$ as exhausted
                \ENDIF
            \ENDFOR

            \IF{all UEs are marked as exhausted}
                \STATE \textbf{return} $(S,F)$
            \ENDIF

            \STATE $i_{min} \leftarrow \min_{i\in{\mathcal{N}}} \ T_i(s_i',f_i-\tau)$
            \STATE $i_{max} \leftarrow \max_{i\in{\mathcal{N}}} \ T_i(s_i^*,f_i)$
            \STATE move $\tau$ quantity of resources from UE $i_{min}$ to $i_{max}$:
            $$f_{i_{max}}\leftarrow f_{i_{max}}+\tau, \ f_{i_{min}}\leftarrow f_{i_{min}}-\tau$$
            \STATE update $s_{i_{max}}^*$ and $s_{i_{min}}^*$:
            $$s_{i_{max}}^* \leftarrow \arg\min_{s_{i_{max}}} \ T_{i_{max}}(s_{i_{max}},f_{i_{max}}),$$
            $$s_{i_{min}}^* \leftarrow \arg\min_{s_{i_{min}}} \ T_{i_{min}}(s_{i_{min}},f_{i_{min}})$$
        \ENDLOOP
    \end{algorithmic}
\end{algorithm}

\subsection{Optimality Analysis}
In this subsection, we prove the optimality of the IAO algorithm.

\begin{theorem}\label{theorem1}
Once terminated, Alg.~\ref{alg:1} outputs an optimal solution of \textbf{P}.
\end{theorem}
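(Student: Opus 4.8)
The plan is to prove optimality via a local-to-global argument: show that the allocation returned by IAO is a local optimum with respect to single-unit resource transfers, then argue that for this particular max-min objective every such local optimum is a global optimum. The key structural fact powering both directions is Property~\ref{property2}: the optimized per-UE latency $T_i(s_i^*(f_i),f_i)$ is monotonically non-increasing in $f_i$. This lets me reason purely in terms of the resource vector $F$, treating the partitioning decisions $S$ as already optimally resolved for each $f_i$.

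First I would characterize the termination condition. The algorithm halts when every UE is \emph{exhausted}, meaning that for each $j$ either $f_j < \tau$ (no resource left to give away) or $T_j(s_j^*(f_j-\tau),\,f_j-\tau) \ge L_{max}$, where $L_{max}=\max_{i} T_i(s_i^*(f_i),f_i)$. I would read this as: no single-unit transfer out of $j$ can leave $j$'s latency strictly below the current bottleneck $L_{max}$. I would then formalize the claim that under this condition, any attempt to reduce $L_{max}$ by moving one unit from some donor $j$ to the current bottleneck UE $i_{max}$ fails, because the donor's latency would rise to at least $L_{max}$, so the maximum is not improved.

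The main work, and the step I expect to be the main obstacle, is upgrading this local statement to global optimality. Here I would argue by contradiction. Suppose an optimal allocation $F^{opt}$ achieves a strictly smaller objective $U^{opt} < L_{max}$. Let $B \triangleq \{i : T_i(s_i^*(f_i),f_i) = L_{max}\}$ be the set of bottleneck UEs at termination; each $i\in B$ must receive strictly more resource under $F^{opt}$ than under the IAO output $F$, since by Property~\ref{property2} decreasing $T_i$ strictly below $L_{max}$ requires strictly more resource (using that $\gamma(f)C_{min}$ is strictly increasing in $f$, so smaller latency forces larger $f_i$). Because both allocations satisfy the budget constraint~\eqref{eq:c2} with equality, the extra resources the bottleneck UEs gain under $F^{opt}$ must be surrendered by some non-bottleneck donor UE $j$ whose $F^{opt}$-allocation is strictly below its IAO allocation $f_j$. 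For that donor, removing even a single unit keeps its latency below $L_{max}$ in the IAO state --- contradicting the fact that $j$ was marked exhausted with $T_j(s_j^*(f_j-\tau),f_j-\tau) \ge L_{max}$. The delicate part is handling the \emph{granularity}: since $\tau=1$ and resources are integral, I must ensure the single-unit exhaustion test correctly certifies that no aggregate reallocation of the donor's surplus can help either, which follows from the monotonicity of $T_j$ in $f_j$ (giving away more only raises the donor's latency further).

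Finally I would stitch these together: the exhaustion of every donor candidate means no feasible integral allocation can push all bottleneck UEs below $L_{max}$ without raising some other UE's latency to at least $L_{max}$, so $U^{opt} \ge L_{max}$, contradicting $U^{opt} < L_{max}$. Hence the IAO output attains the minimum and is optimal. I would take care to note that Property~\ref{property1} guarantees each $s_i^*$ is genuinely optimal for its $f_i$, so that reasoning about $T_i(s_i^*(f_i),f_i)$ rather than $T_i(s_i,f_i)$ loses no generality in the objective.
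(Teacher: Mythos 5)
Your proposal is correct and follows essentially the same route as the paper's proof: a contradiction argument in which any allocation beating $L_{max}$ must give strictly more resource to the bottleneck UE(s) (by Property~\ref{property2} and the optimality of each $s_i^*$), conservation of the budget then forces a donor $q$ with $f_q' \le f_q - 1$, and the exhaustion condition $T_q(s_q^*(f_q-1),f_q-1)\ge L_{max}$ together with monotonicity shows that donor's latency is at least $L_{max}$, a contradiction. The only cosmetic difference is that your parenthetical appeal to strict monotonicity of $\gamma$ is unnecessary --- Property~\ref{property2} alone suffices, exactly as the paper uses it.
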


\begin{proof}\label{proof:theorem1}
See Appendix \ref{appendix:theorem1}.
\end{proof}

\subsection{Complexity Analysis}\label{subsection:complexity}

In this subsection, we investigate the complexity of our IAO algorithm.

In this regard, we define the \textbf{Manhattan Distance} between any two allocation vectors $(S,F)$ and $(S',F')$ as $D_m \triangleq \sum_{i=1}^n |f_i-f_i'|$.
The Manhattan distance quantifies the difference in resource allocation between $(S,F)$ and $(S',F')$. We prove in the following proposition that the Manhattan distance between any two profiles $(S,F)$ and $(S^*,F^*)$ is upper-bounded by $2\beta$ as long as $\sum_{i \in \mathcal{N}} f_i^* = \sum_{i \in \mathcal{N}} f_i = \beta$, where we recall that $\beta$ is the total amount of computation resource of the edge server.

\begin{ppst}\label{proposition1}
The Manhattan distance between two profiles $(S,F)$ and $(S^*,F^*)$ is upper-bounded by $2\beta$ as long as $\sum_{i \in \mathcal{N}} f_i^* = \sum_{i \in \mathcal{N}} f_i = \beta$.
\end{ppst}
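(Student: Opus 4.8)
The plan is to reduce the statement to an immediate consequence of the non-negativity of the allocations, since this is the only structural feature the bound actually uses. The first thing I would observe is that the Manhattan distance $D_m = \sum_{i=1}^n |f_i - f_i^*|$ depends solely on the resource vectors $F$ and $F^*$, so the partitioning decisions $S$ and $S^*$ play no role and can be discarded at the outset. The key ingredient is then the elementary termwise inequality $|a - b| \le a + b$, valid for any non-negative reals $a,b$ (it follows from $|a-b| \le |a| + |b|$ together with $|a| = a$ and $|b| = b$). Constraint \eqref{eq:c4} guarantees that every $f_i$ and every $f_i^*$ is a non-negative integer, so this inequality applies coordinatewise with $a = f_i$ and $b = f_i^*$.

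Next I would sum the coordinatewise bounds over all $i \in \mathcal{N}$ and invoke the budget constraint \eqref{eq:c2} for both profiles, which is precisely what collapses each of the two resulting sums to $\beta$. Concretely, the chain would read
\begin{equation*}
D_m = \sum_{i=1}^n |f_i - f_i^*| \;\le\; \sum_{i=1}^n \bigl(f_i + f_i^*\bigr) = \sum_{i=1}^n f_i + \sum_{i=1}^n f_i^* = \beta + \beta = 2\beta,
\end{equation*}
which is the claimed bound. (An alternative route, splitting $\mathcal{N}$ into the indices where $f_i > f_i^*$ and where $f_i < f_i^*$ and bounding each partial sum by $\beta$, gives the same result but is slightly longer; I would present the triangle-type argument above as the primary proof.)

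There is no substantive obstacle in this proposition; the whole difficulty, such as it is, lies in making explicit that the hypothesis $\sum_{i} f_i = \sum_{i} f_i^* = \beta$ is exactly the hook that turns the two coordinate sums into $\beta$ apiece. I would close by noting that the bound $2\beta$ is tight: taking $F$ to place all $\beta$ units on one UE and $F^*$ to place them on a different UE (possible whenever $n \ge 2$) attains $D_m = 2\beta$. This confirms that $2\beta$ is the diameter of the feasible resource simplex under the Manhattan metric, which is the quantity the convergence argument of Section~\ref{subsection:complexity} ultimately relies upon.
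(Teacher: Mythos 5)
Your proof is correct and follows essentially the same route as the paper's: the termwise triangle inequality $|f_i - f_i^*| \le f_i + f_i^*$ for non-negative allocations, summed and collapsed to $\beta + \beta$ via the budget constraint. The added tightness remark is a nice observation but does not change the substance.
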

\begin{proof}\label{proof:distance}
It follows from $f_i \geq 0$, $f_i \geq 0, \ \forall i \in {\mathcal{N}}$ and $\sum_{i \in \mathcal{N}} f_i^* = \sum_{i \in \mathcal{N}} f_i = \beta$ that
\begin{align*}
D_m &= \sum_{i \in \mathcal{N}} |f_i-f_i^*| \leq \sum_{i \in \mathcal{N}} |f_i| + \sum_{i \in \mathcal{N}} |f_i^*| \\
&=\sum_{i \in \mathcal{N}} f_i + \sum_{i \in \mathcal{N}} f_i^*=2\beta.
\end{align*}
\end{proof}


Let $(S(t),F(t))$ denote the resource allocation profile derived in iteration $t$ of Alg.~\ref{alg:1}, we prove that the Manhattan distance between $(S(t),F(t))$ and the optimal solution $(S^*,F^*)$ decreases by $2$ every iteration $t$ until when Alg.~\ref{alg:1} is terminated.

\begin{ppst}\label{proposition2}
Denote $D_m(t)$ as the Manhattan distance between $(S(t),F(t))$ and $(S^*,F^*)$. Before the termination of Alg.~\ref{alg:1}, it holds that $D_m(t)-D_m(t+1)=2, \ \forall t$.
\end{ppst}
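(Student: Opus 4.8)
The plan is to exploit the fact that a single iteration of Alg.~\ref{alg:1} perturbs the resource vector $F(t)$ in only two coordinates: it sets $f_{i_{max}} \leftarrow f_{i_{max}} + \tau$ and $f_{i_{min}} \leftarrow f_{i_{min}} - \tau$ with $\tau = 1$, leaving every other coordinate untouched. Hence
\begin{align*}
D_m(t)-D_m(t+1) &= \bigl(|f_{i_{max}}-f_{i_{max}}^*| - |f_{i_{max}}+1-f_{i_{max}}^*|\bigr) \\
&\quad + \bigl(|f_{i_{min}}-f_{i_{min}}^*| - |f_{i_{min}}-1-f_{i_{min}}^*|\bigr),
\end{align*}
and, since the moves are unit integer steps, each of the two differences lies in $\{-1,+1\}$. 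Therefore the claimed identity $D_m(t)-D_m(t+1)=2$ is \emph{equivalent} to showing that both transfers are made toward the optimum, i.e.
$$f_{i_{max}}(t) < f_{i_{max}}^* \qquad \text{and} \qquad f_{i_{min}}(t) > f_{i_{min}}^*.$$
The remainder of the proof is devoted to these two strict inequalities.

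For the first inequality I would argue by contradiction. Suppose $f_{i_{max}} \ge f_{i_{max}}^*$. Since $i_{max}$ realizes the current worst latency, $L_{max}=T_{i_{max}}(s_{i_{max}}^*,f_{i_{max}})$, Property~\ref{property2} gives $L_{max} \le T_{i_{max}}(s_{i_{max}}^*,f_{i_{max}}^*) \le U(S^*,F^*)$. On the other hand, every feasible profile has objective at least the optimum, so $L_{max}=U(S(t),F(t)) \ge U(S^*,F^*)$; together these force $L_{max}=U(S^*,F^*)$, i.e. the current iterate is already optimal. I would then rule this out before termination: while the loop is still running a genuine transfer is performed, so not all UEs are exhausted, and by the exhaustion test together with Theorem~\ref{theorem1} the worst-case latency of a pre-termination iterate is strictly above the optimum, contradicting $L_{max}=U(S^*,F^*)$. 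This yields $f_{i_{max}} < f_{i_{max}}^*$.

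The second inequality, $f_{i_{min}} > f_{i_{min}}^*$, is the main obstacle. The natural first step is to show that every over-resourced UE is a legitimate low-latency donor: if $f_i > f_i^*$ then $f_i-\tau \ge f_i^* \ge 0$, so UE $i$ passes the feasibility test, and Property~\ref{property2} gives $T_i(s_i',f_i-\tau) \le T_i(s_i^*,f_i^*) \le U(S^*,F^*) < L_{max}$, so $i$ is non-exhausted. Because $\sum_i f_i=\sum_i f_i^*=\beta$ and $i_{max}$ is under-resourced, the over-resourced set is non-empty, hence a valid donor always exists. The difficulty is that the algorithm selects $i_{min}$ as the non-exhausted UE of \emph{minimum} post-adjustment latency, and a priori this minimizer could be an under-resourced UE that merely happens to stay fast after losing a unit. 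To close this gap I would fix $(S^*,F^*)$ to be the optimal solution the algorithm eventually outputs and establish a no-overshoot invariant via an exchange argument: transferring a unit away from an under-resourced UE can never strictly improve the min-max objective relative to $F^*$, whereas the transfer the algorithm actually performs does strictly improve it, forcing the chosen $i_{min}$ to be over-resourced.

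Combining the two inequalities, both bracketed differences equal $+1$ and $D_m(t)-D_m(t+1)=2$, as required. I expect the delicate point to be the third step: ruling out a donation from an under-resourced UE, since optimal allocations need not be unique and the comparison must be anchored to the correct optimum (the output of Alg.~\ref{alg:1}). It is precisely the exchange argument, combined with the strict pre-termination sub-optimality established in Step~two, that makes the per-coordinate monotonicity — and hence the exact decrease by $2$ — go through.
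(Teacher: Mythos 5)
Your reduction of the claim to the two strict inequalities $f_{i_{max}}(t) < f_{i_{max}}^*$ and $f_{i_{min}}(t) > f_{i_{min}}^*$ is exactly the paper's decomposition, and you correctly flag the donor inequality as the crux; but neither inequality is actually established by your argument. For the receiver, your contradiction rests on the claim that a pre-termination iterate has worst-case latency strictly above the optimum. That is false: when several UEs tie at the maximum latency, an iterate can already attain the optimal value $U(S^*,F^*)$ while some UE is still non-exhausted, so the loop keeps running with $L_{max}=U(S^*,F^*)$ and no contradiction arises. The same unjustified strict inequality $U(S^*,F^*)<L_{max}$ also powers your claim that every over-resourced UE is a legitimate donor; with only $U(S^*,F^*)\le L_{max}$ you obtain $T_i(s_i',f_i-1)\le L_{max}$, and equality would mark UE $i$ exhausted. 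The paper avoids value comparisons with $L_{max}$ here: it fixes $(S^*,F^*)$ to be the allocation the algorithm eventually outputs and argues along the trajectory that the current maximum-latency UE, being exhausted, never donates in any later iteration, so its allocation is non-decreasing from $f_{i_{max}}+1$ up to the terminal value $f_{i_{max}}^*$, giving $f_{i_{max}}<f_{i_{max}}^*$ directly.

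For the donor, your ``exchange argument'' is only announced, and its stated engine --- that the transfer actually performed strictly improves the min-max objective --- is also false, since a single unit transfer generally leaves the maximum unchanged. The paper's argument is a concrete chain exploiting the strict minimality of $i_{min}$ among post-donation latencies: assume $f_j\le f_j^*$ for $j=i_{min}$; Property~\ref{property2} gives $T_j(s_j,f_j-1)\ge T_j(s_j^*,f_j^*-1)$; because every UE is exhausted at the algorithm's output, $T_j(s_j^*,f_j^*-1)\ge \max_{p}T_p(s_p^*,f_p^*)$; and for an over-resourced UE $x$ (which exists since $\sum_i f_i=\sum_i f_i^*$ and $i_{max}$ is under-resourced), $\max_{p}T_p(s_p^*,f_p^*)\ge T_x(s_x^*,f_x^*)\ge T_x(s_x,f_x-1)$. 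Chaining these yields $T_j(s_j,f_j-1)\ge T_x(s_x,f_x-1)$, contradicting the defining property of $i_{min}$. You would need to supply an argument of this kind, or otherwise repair the two strictness claims, for your outline to become a proof.
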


\begin{proof}\label{proof:reduce2}
See Appendix \ref{appendix:proposition2}.
\end{proof}

Armed with the above two propositions, the following theorem establishes the time complexity of Alg.~\ref{alg:1}.
\rev{
\begin{theorem}\label{theorem2}
Alg. 1 terminates in at most $\beta$ iterations and its time complexity is upper bounded by $O(nk\beta)$.
\end{theorem}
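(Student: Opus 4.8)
The plan is to treat the two assertions separately---first bounding the number of outer-loop iterations, then the cost of a single iteration---and multiply them. The whole iteration bound rests on turning Propositions~\ref{proposition1} and~\ref{proposition2} into a monovariant argument, so I would not re-prove them but simply chain them together.

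For the iteration count, fix an optimal allocation $(S^*,F^*)$ as the reference for the Manhattan distance $D_m(t)=\sum_{i\in\mathcal{N}}|f_i(t)-f_i^*|$. The initial profile chosen by the algorithm is feasible, so by~\eqref{eq:c2} it satisfies $\sum_{i\in\mathcal{N}}f_i(0)=\beta=\sum_{i\in\mathcal{N}}f_i^*$, and Proposition~\ref{proposition1} gives $D_m(0)\le 2\beta$. Proposition~\ref{proposition2} states that, before termination, every iteration decreases this distance by exactly $2$, so $D_m(t)=D_m(0)-2t$ for each non-terminal $t$. Since $D_m(t)\ge 0$ always (it is a sum of absolute values), this linear decrease cannot persist beyond $t=D_m(0)/2\le\beta$; hence the algorithm reaches its exhaustion-based stopping rule in at most $\beta$ iterations.

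For the per-iteration cost, I would walk through the body of the main loop. Computing $L_{max}$ is an $O(n)$ scan over the current latencies. The dominant step is the inner loop over $j\in\mathcal{N}$, which for each UE evaluates $\arg\min_{s_j}T_j(s_j,f_j-\tau)$; by Property~\ref{property1} this search over $s_j\in\{0,\dots,k_j\}$ costs $O(k_j)$, so the whole loop costs $O\!\left(\sum_{j\in\mathcal{N}}k_j\right)=O(nk)$ with $k\triangleq\max_i k_i$. Selecting $i_{min}$ and $i_{max}$ adds $O(n)$, and the two $\arg\min$ updates after the resource transfer add $O(k)$; both are absorbed. Thus one iteration runs in $O(nk)$ time, and multiplying by the $\beta$-iteration bound yields the claimed $O(nk\beta)$.

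The main obstacle is conceptual rather than computational: I must ensure the distance monovariant is genuinely coupled to termination. The argument relies on Proposition~\ref{proposition2} holding at \emph{every} non-terminal iteration, so that the only way the nonnegative, integer-valued sequence $D_m(t)$ can stop dropping by $2$ is for the algorithm to have already triggered its stopping condition; this pins the iteration count to at most $\beta$ independently of any non-uniqueness in the reference optimum $(S^*,F^*)$. I would therefore treat the validity of Proposition~\ref{proposition2} up to termination as the load-bearing ingredient and keep the remaining bookkeeping lightweight.
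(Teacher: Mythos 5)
Your proposal is correct and follows essentially the same route as the paper: Proposition~\ref{proposition1} bounds the initial Manhattan distance by $2\beta$, Proposition~\ref{proposition2} gives the decrease of $2$ per iteration (hence at most $\beta$ iterations), and the $O(nk)$ per-iteration cost dominated by the inner \textbf{for} loop yields $O(nk\beta)$. Your explicit appeal to the nonnegativity of $D_m(t)$ to force termination is a slightly cleaner statement of the step the paper leaves implicit, but it is the same argument.
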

\begin{proof}\label{proof:theorem2}
The first \textbf{for} loop takes $O(nk)$ time.
And each iteration of the main loop (from step 6 to 25) takes $O(nk)$ time, dominated by the second \textbf{for} loop (from step 8 to 17). $D_m$ is the Manhattan distance indicating the difference between initial solution to optimal solution. It follows from Proposition 2 that each iteration shrinks this difference by $2$. Therefore, the total rounds of iteration is $D_m/2$. The time complexity, dominated by the main loop, is $O(nkD_m)$.

Beside, it follows from Proposition 1 that the Manhattan distance between any initial allocation profile $(S(0),F(0))$ is at most $2\beta$ from the optimal one. Hence, Alg.~\ref{alg:1} terminates in at most $\beta$ iterations. Therefore, the total time complexity of Alg.~\ref{alg:1} is upper bounded by $O(nk\beta)$.
\end{proof}

Theorem~\ref{theorem2} demonstrates that the complexity of our algorithm scales linearly in terms of the number of UEs, the quantity of computing resources, and the number of DNN layers.
And it also indicates that a proper initialization (closer to the optimal solution and less Manhattan distance) will lead to less runtime of the algorithm. This feature would help when the algorithm is adapted to an online system since the historical performance can be utilized to perform a better initialization.
}


\subsection{Accelerating Alg.~\ref{alg:1} with Decremental Stepsize $\tau$}

In this subsection, we improve Alg.~\ref{alg:1} by adopting a dynamic stepsize $\tau$ instead of fixing $\tau$ to $1$ using the algorithm named iterative alternating optimization with decremental stepsize (IAO-DS). The rationale behind our idea is to start with a large stepsize to quickly adjust the allocation profile $(S,F)$ towards the optimal and then gradually refine our search by decreasing the stepsize.

\begin{algorithm}[t]
    \caption{Iterative Alternating Optimization with Decremental Stepsize (IAO-DS)}
    \label{alg:imp}
    \begin{algorithmic}[1]
        \STATE set decremental factor $p$, set $q\leftarrow \lfloor \log_p \beta \rfloor$
        \STATE choose random initial allocation profile $(S_0,F_0)$


        \FOR{$i=0$ \textbf{to} $q$}
            \STATE run Alg.~\ref{alg:1} with $(S_{i},F_{i})$ as the initial allocation profile under stepsize $\tau=p^{q-i}$, obtain the output denoted by $(S_{i+1},F_{i+1})$
        \ENDFOR

        \STATE \textbf{return} $(S_{q+1},F_{q+1})$
    \end{algorithmic}
\end{algorithm}

The pseudo-code of IAO-DS algorithm is illustrated in Alg.~\ref{alg:imp}. We start by choosing a decremental factor $p$ which is an integer larger than $1$, e.g., $p=2$. We then enter the \textbf{for} loop. At each iteration, we run Alg.~\ref{alg:1} with the output of the last iteration $(S_{i},F_{i})$ as the initial allocation profile by using stepsize $\tau=p^{q-i}$. By this we gradually approaches the optimal allocation and refine our search by decreasing $\tau$ until the last iteration where we can find the optimal solution with $\tau=1$. Noting that the optimal decremental factor $p$ with fastest convergence speed is hard to derive theoretically, but can be enumerated efficiently in practice since the value of resource amount $\beta$ will not be a large number.

Alg.~\ref{alg:imp} is guaranteed to return an optimal solution because the last iteration of the \textbf{for} loop invokes Alg.~\ref{alg:1} with $\tau=1$, whose optimality is proved in Theorem~\ref{theorem1}. We next establish the complexity of Alg.~\ref{alg:imp}.

\begin{theorem}\label{th:imp}
The time complexity of Alg.~\ref{alg:imp} is bounded by $O(nk\beta)$.
\end{theorem}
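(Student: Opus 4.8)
The plan is to bound the total number of main-loop iterations summed over all $q+1$ invocations of Alg.~\ref{alg:1} inside the outer \textbf{for} loop, and then multiply by the $O(nk)$ per-iteration cost already established in the proof of Theorem~\ref{theorem2} (that cost is independent of the stepsize, since each iteration only recomputes $s_j'$ for the $n$ UEs in $O(k)$ time each). First I would fix a phase index $i$ and examine the call to Alg.~\ref{alg:1} with stepsize $\tau_i=p^{q-i}$ starting from the profile $(S_i,F_i)$. Because every move transfers $\tau_i$ units from one UE to another, the reachable profiles form a lattice of spacing $\tau_i$ through $(S_i,F_i)$, and by the argument of Theorem~\ref{theorem1} applied to $\mathbf{P}$ restricted to this lattice, Alg.~\ref{alg:1} converges to the optimum of that restricted problem, which is exactly the output $(S_{i+1},F_{i+1})$.

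The key step is to establish a stepsize-$\tau$ analogue of Proposition~\ref{proposition2}: each main-loop iteration at stepsize $\tau_i$ shrinks the Manhattan distance between the current profile and the lattice optimum $(S_{i+1},F_{i+1})$ by exactly $2\tau_i$, since the single transfer reduces the per-coordinate discrepancy on each of the two touched UEs by $\tau_i$. Granting this, I would invoke Proposition~\ref{proposition1}, which bounds the Manhattan distance between $(S_i,F_i)$ and $(S_{i+1},F_{i+1})$ by $2\beta$ because both profiles have resource sum $\beta$; dividing, the number of iterations in phase $i$ is at most $2\beta/(2\tau_i)=\beta/p^{q-i}$.

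Summing over phases then reduces to a geometric series. Writing $j=q-i$, the total iteration count is at most $\sum_{i=0}^{q}\beta/p^{q-i}=\beta\sum_{j=0}^{q}p^{-j}<\beta\cdot\frac{p}{p-1}\le 2\beta$ for every integer $p\ge 2$. Multiplying by the $O(nk)$ cost per iteration gives $O(nk\beta)$ for all the main loops combined. The only other contribution is the initial \textbf{for} loop of Alg.~\ref{alg:1} that computes each $s_i^*$, which runs once per phase at cost $O(nk)$ and hence contributes $O(nk(q+1))=O(nk\log_p\beta)$, dominated by $O(nk\beta)$. Adding the two gives the claimed $O(nk\beta)$ bound.

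The main obstacle I anticipate is justifying the stepsize-$\tau$ generalization of Proposition~\ref{proposition2}: it is stated only for $\tau=1$, and at a coarser stepsize the algorithm optimizes over a shifted lattice that need not contain the global optimum $(S^*,F^*)$, nor need $\tau_i$ divide $\beta$. I would address this by re-running the Proposition~\ref{proposition2} argument verbatim with $1$ replaced by $\tau_i$ and with $(S^*,F^*)$ replaced by the phase-$i$ lattice optimum $(S_{i+1},F_{i+1})$, verifying that the monotone-progress property — each transfer closes the per-coordinate gap toward the lattice optimum — survives the rescaling. Once this per-phase bound is in place, everything downstream is the geometric-sum bookkeeping above.
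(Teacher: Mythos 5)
Your proof follows essentially the same route as the paper's: bound each phase at stepsize $\tau$ by $O(nk\,\beta/\tau)$ iterations of cost $O(nk)$ each via the Manhattan-distance argument, then sum the geometric series over phases to obtain $O(nk\beta)$. You are in fact more explicit than the paper (which just says ``by adapting the analysis'') about the one subtlety it glosses over — that the target of the per-phase distance argument must be the lattice optimum $(S_{i+1},F_{i+1})$ reachable at stepsize $\tau_i$ rather than the global optimum $(S^*,F^*)$ — and your handling of that point is sound.
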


\begin{proof}
See Appendix \ref{app:t3}.
\end{proof}

In Theorem \ref{th:imp}, we prove that the theoretical complexity of IAO-DS is still bounded by $O(nk\beta)$, i.e., iteratively adjusting the stepsize would not increase the complexity. Furthermore, extensive experimental results in Section~\ref{subsection:scale} show that IAO-DS can converge much faster than IAO, due to the fact that large stepsizes at the initial iterations can quickly steer the resource allocation towards the optimal solution.

\subsection{Performance Analysis with Estimation Error}


In the theoretical analysis above, we implicitly assume that the estimation of execution latency of a UE is absolutely accurate.
We complete the analysis of our algorithm by investigating a realistic case when the latency model is inaccurate.
As proved in the following, our framework guarantees a bounded performance loss in such case.

We use superscript $E$ to denote the estimated latency given by our model and $A$ for the actual latency. Let $\epsilon$ be the relative estimation error defined as
\begin{eqnarray}
\epsilon \triangleq \frac{|T^{E}-T^{A}|}{T^{A}}.
\label{eq:epsilon}
\end{eqnarray}

\begin{theorem}\label{theorem4}
The relative utility loss of Alg.~\ref{alg:1} and Alg.~\ref{alg:imp} under relative estimation error $\epsilon$ is upper-bounded by $2\epsilon/(1-\epsilon)$.
\end{theorem}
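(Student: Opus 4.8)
The plan is to bound the relative difference between the utility of the solution our algorithm \emph{actually} computes (driven by estimated latencies) and the utility of the true optimum (under actual latencies). Let $(S^E, F^E)$ denote the allocation returned by the algorithm, which is optimal with respect to the estimated latencies $T^E$, and let $(S^A, F^A)$ denote the genuine optimum with respect to the actual latencies $T^A$. The key quantities are $U^A(S^E, F^E) = \max_i T^A_i(s^E_i, f^E_i)$, the actual worst-case latency incurred by the algorithm's output, and $U^A(S^A, F^A)$, the best achievable actual utility. The relative utility loss we must control is $\bigl(U^A(S^E,F^E) - U^A(S^A,F^A)\bigr) / U^A(S^A,F^A)$.

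First I would translate the estimation-error definition \eqref{eq:epsilon} into a pair of two-sided bounds valid for \emph{every} UE and \emph{every} feasible decision: from $\epsilon = |T^E - T^A|/T^A$ we get $(1-\epsilon)\,T^A \le T^E \le (1+\epsilon)\,T^A$, and equivalently $T^E/(1+\epsilon) \le T^A \le T^E/(1-\epsilon)$. These sandwich inequalities carry over to the max over UEs, so $U^E$ and $U^A$ obey the same relations pointwise on any fixed profile. Next I would chain the two optimalities. Since the algorithm minimizes the estimated utility, $U^E(S^E,F^E) \le U^E(S^A,F^A)$. I would then bound the actual utility of the algorithm's output from above using $T^A \le T^E/(1-\epsilon)$, bound $U^E(S^A,F^A) \le (1+\epsilon)\,U^A(S^A,F^A)$ using the upper estimate, and combine through the optimality inequality:
\begin{align*}
U^A(S^E,F^E) &\le \frac{U^E(S^E,F^E)}{1-\epsilon} \le \frac{U^E(S^A,F^A)}{1-\epsilon} \le \frac{1+\epsilon}{1-\epsilon}\,U^A(S^A,F^A).
\end{align*}
Subtracting $U^A(S^A,F^A)$ and dividing by it yields a relative loss bounded by $\tfrac{1+\epsilon}{1-\epsilon} - 1 = \tfrac{2\epsilon}{1-\epsilon}$, which is precisely the claimed bound.

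The only delicate point is ensuring the error bound is applied uniformly and in the correct direction at each step: the lower estimate $(1-\epsilon)T^A \le T^E$ must be used where we inflate the algorithm's actual cost, while the upper estimate $T^E \le (1+\epsilon)T^A$ must be used on the true optimum. I expect the main obstacle to be a subtle one about the scope of $\epsilon$: the definition in \eqref{eq:epsilon} is stated for a single latency value, so I would need to argue (or assume as part of the model) that $\epsilon$ is a \emph{uniform} relative-error bound holding simultaneously across all UEs and all candidate partition/allocation decisions, since the argument passes the $\max$ operator through the inequalities and compares two \emph{different} profiles. Because the bound is derived purely from these uniform sandwich inequalities and the optimality of the returned solution under $T^E$ --- a property guaranteed for both Alg.~\ref{alg:1} by Theorem~\ref{theorem1} and Alg.~\ref{alg:imp} by its reduction to Alg.~\ref{alg:1} with $\tau=1$ --- the same conclusion applies to both algorithms without modification.
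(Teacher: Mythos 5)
Your proposal is correct and follows essentially the same argument as the paper: both use the uniform sandwich bounds $(1-\epsilon)T^A \le T^E \le (1+\epsilon)T^A$ together with the optimality of the returned profile under the estimated latencies ($U^E \le U^{E*}$) to chain $U^A \le \frac{1+\epsilon}{1-\epsilon}U^{A*}$ and conclude a relative loss of $\frac{2\epsilon}{1-\epsilon}$. Your remark that $\epsilon$ must be a uniform bound across all UEs and all candidate profiles is a valid observation about an assumption the paper leaves implicit, but it does not change the structure of the proof.
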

\begin{proof}
See Appendix \ref{app:t4}.
\end{proof}

Theorem~\ref{theorem4} essentially demonstrates that our solution at most doubles the estimation error in the final result under small $\epsilon$. In this regard, it exhibits nice resilience against estimation error.

\section{Performance Evaluation}\label{section:evaluation}

\subsection{Prototype Setup}\label{protosetup}
To evaluate the performance of our proposal, we build a multi-UE edge system prototype.
As shown in Fig. \ref{real}, we use a workstation equipped with a $8$-core $3.7$GHz Intel CPU and $16$G RAM to act as the edge server to provide computing services to UEs.
The UEs are composed of $2$ Raspberry Pis and $2$ NVIDIA Jetson Nanos.
All the UEs are either wirelessly connected to the edge server through Wi-Fi, or directly connected with the edge server by LAN to emulate heterogeneous network conditions for different UEs.

We use the pretrained DNN models from the standard implementation from famous package tensorflow\cite{tensorflow}.
In this prototype, we deploy the lightweight DNN model --  MobilenetV2~\cite{mobilenetv2} on Raspberry Pis, since Raspberry Pi is incapable to load a large-size DNN model like VGG19~\cite{vgg19} due to its 1G memory limit.
Both MobilenetV2 and VGG19 are deployed on  NVIDIA Jetson Nanos.

At the edge server side, we use the Docker containers\cite{docker} as virtual servers to provide the DNN partitioning service to each UE independently.
Multiple CPU cores (i.e., computational resources) are assigned to dockers, and the minimum computational resource unit is set to $0.1$ core.

\subsection{Experiment Settings}\label{sim}
\begin{figure}[t]
    \centerline{\includegraphics[width=7cm]{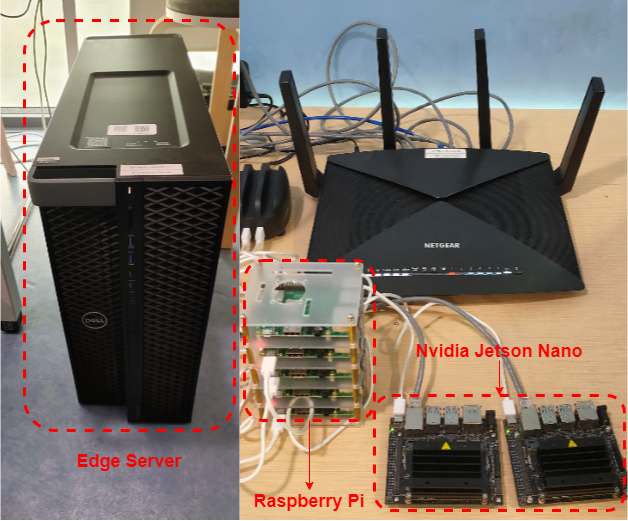}}
    \caption{Experiment Prototype}
    \label{real}
    \end{figure}

\begin{figure*}[t]
    \begin{minipage}[t]{0.5\linewidth}
    \centering
    \includegraphics[width=2.5in]{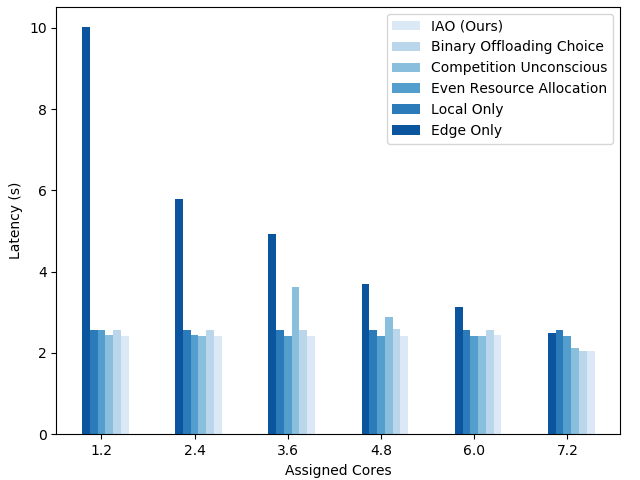}
    \caption{\quad Latency versus the amount of computational\\\quad \quad resources under average bandwidth of 5Mb/s}\label{5kb}
    \end{minipage}%
    \begin{minipage}[t]{0.5\linewidth}
    \centering
    \includegraphics[width=2.5in]{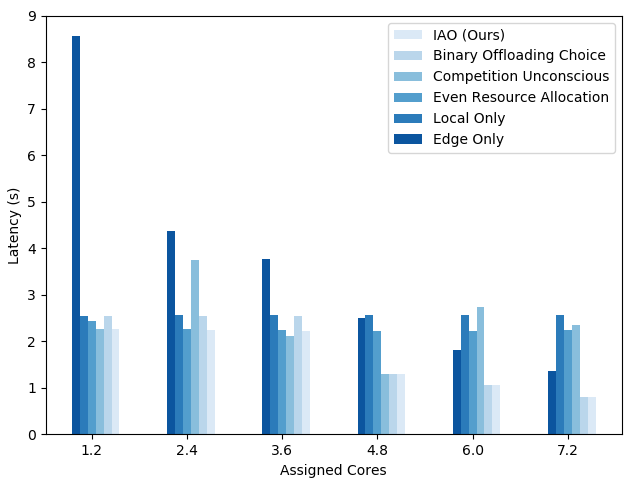}
    \caption{\quad Latency versus the amount of computational\\ resources with high bandwidth (10Mb/s for mobile devices and 100M/s for immobile devices)}\label{100kb}
    \end{minipage}
    \end{figure*}

    \begin{figure*}[t]
    \begin{minipage}[t]{0.5\linewidth}
    \centering
    \includegraphics[width=2.5in]{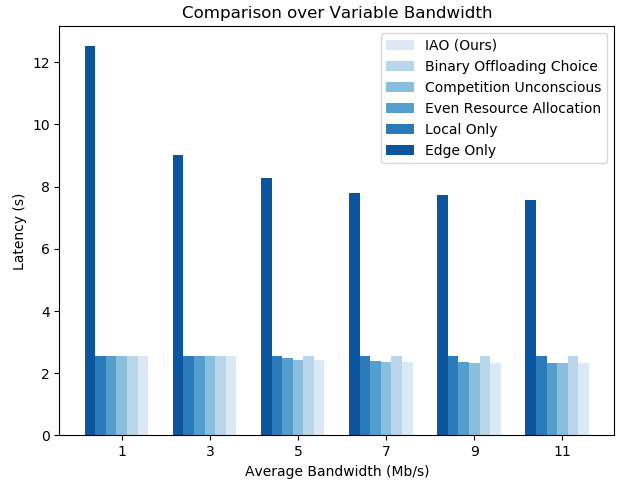}
    \caption{\quad Latency versus average bandwidth\\\ \ \ \quad when edge server has 2 CPU cores}\label{2cores}
    \end{minipage}%
    \begin{minipage}[t]{0.5\linewidth}
    \centering
    \includegraphics[width=2.5in]{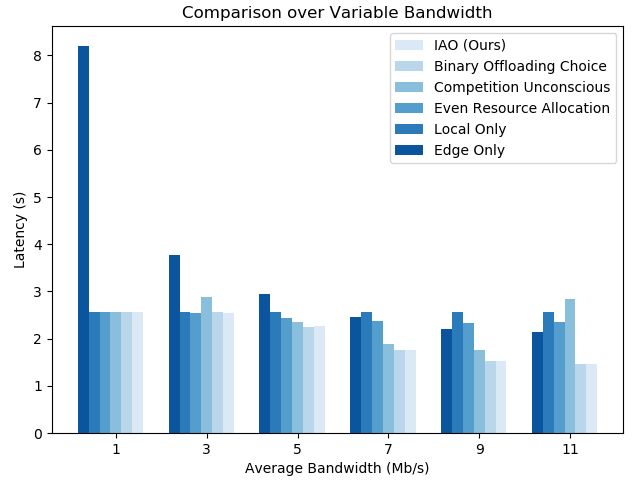}
    \caption{\quad Latency versus average bandwidth\\ when edge server has 7 CPU cores}\label{7cores}
    \end{minipage}
    \end{figure*}
By default, the edge server is cooperating with 4 UEs if not explicitly mentioned.
Two Raspberry Pis running MobilenetV2 are connected to the edge server via WiFi to represent low-end mobile devices, and two NVIDIA Jetson Nanos running VGG19 are connected with the edge server by LAN to act as high-end devices such as smart routers and smart home devices. We will also conduct experimental evaluations with different number of UEs.

\subsection{Benchmarks}\label{benchmarks}
We evaluate our framework by comparing the performance with two state-of-the-art paradigms and three naive approaches as follows.

\textbf{Binary Offloading Choice}. A paradigm from \cite{minmax} that makes a joint decision over task offloading and computational resources allocation. However, their work only considers to offload a task as an entire entity. Thus, a task can only be executed on either the edge server or the local device in a non-cooperative fashion.

\textbf{Even Resource Allocation}. Several works have considered to offload part of the neural network to edge/cloud in a single user case\cite{neurosurgeon}\cite{jdnn}\cite{coin}. We extend them into multi-user case that the edge server fairly allocates the computational resources for each UE evenly.

\textbf{Competition Unconscious}. All users optimize their own DNN partitioning choice according to the amount of computational resource at the edge server without considering the competition of resource allocation against other UEs. After offloading decisions are made, the edge server evenly allocates its computational resources to the offloaded tasks.

\textbf{Local Only}. All users execute their tasks locally.

\textbf{Edge Only}. All users offload their tasks to the edge server, and the edge server is capable to adjust the computational resources assigned to each user.

\subsection{Experimental Results}\label{result}
We first present extensive experiments over different amounts of bandwidth and computational resource on the edge server. Since IAO algorithm achieves the same performance as the IAO-DS algorithm, here in this part we only illustrate the results of the IAO algorithm. We will compare these two algorithms from the scalability point of view in the coming section.

As shown in Figs. \ref{5kb}, \ref{100kb}, \ref{2cores} and \ref{7cores}, generally, the execution latency of UEs benefits from the increase of computational resources and bandwidth.
However, for \textbf{Competition Unconscious} scheme, the increase of resources may sometimes lead to worse performances.
The reason is that, with the increase of resources of the environment, UEs blindly make aggressive offloading decisions to pursue better utilities.
However, when there are multiple UEs competing on the limited resources on the edge server, none of them is able to enjoy the expected performance as there were no competitions and the global utility may become worse than before.

Comparing with the traditional \textbf{Local Execution} and \textbf{Edge Only Execution} schemes, as shown in Fig. \ref{100kb}, our scheme shows a significant improvement of at most \textbf{67.6\%} than \textbf{Local Execution} and \textbf{41\%} than \textbf{Edge Only Execution} when there are enough computational resources and bandwidth.

As for the scheme of \textbf{Even Resource Allocation}, the main drawback is it can not properly assign computational resources to help the neediest UEs with large latencies.
As shown in in Figs. \ref{100kb} and \ref{7cores}, our scheme outperforms \textbf{Even Resource Allocation} for at most \textbf{63\%}.

The scheme of \textbf{Binary Offloading Choice} is basically a weak version of our scheme.
Besides the choices of executing locally and offloading to the edge server, our scheme explores more possible offloading choices with DNN partitioning.
As shown in Figs. \ref{100kb} and \ref{7cores}, when there are abundant resources, the optimal choice could just be sending all tasks to the edge server if needed. However, when the resource becomes scarce as in Figs. \ref{5kb} and \ref{100kb}, our scheme is capable to utilize a very tiny amount of resources to optimize the performance.
From Fig. \ref{100kb} it shows that our scheme at most \textbf{14\%} improvement than \textbf{Binary Offloading Choice} scheme.
This ratio varies according to the architecture of DNN model for whether there are proper positions for DNN partitioning.

\begin{figure}[t]
    \centerline{\includegraphics[width=7.5cm]{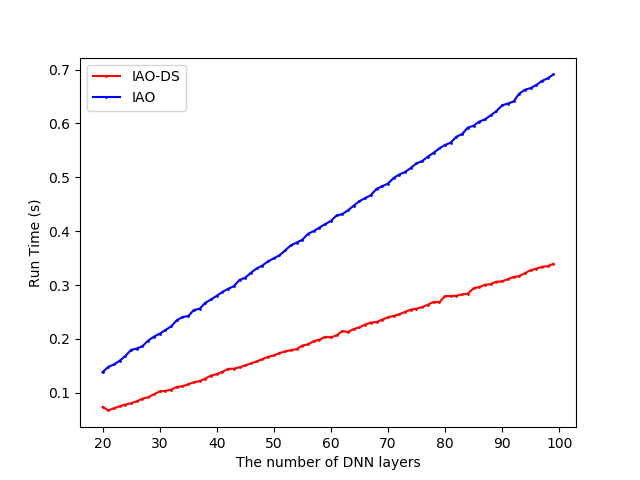}}
    \caption{Run time versus $k$}
    \label{pic:ck}
    \end{figure}

\begin{figure}[t]
    \centerline{\includegraphics[width=7.5cm]{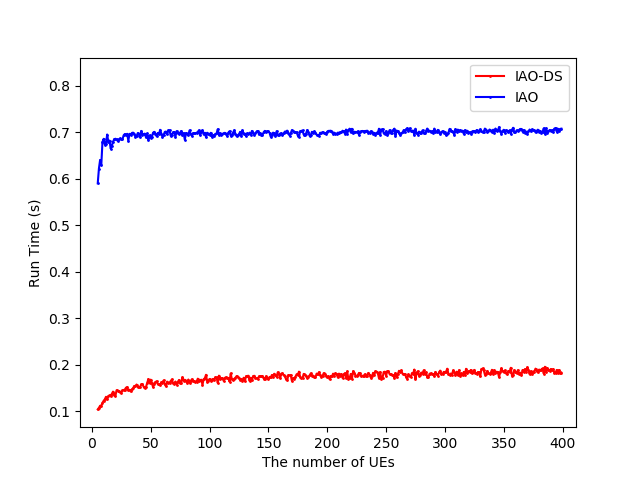}}
    \caption{Run time versus $n$}
    \label{pic:cn}
    \end{figure}

\begin{figure}[t]
    \centerline{\includegraphics[width=7.5cm]{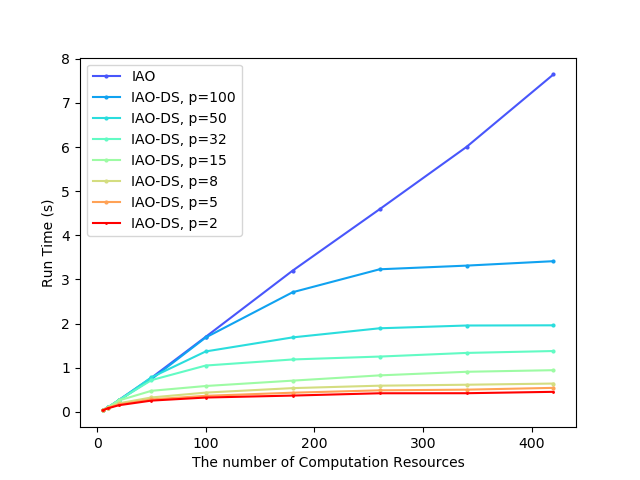}}
    \caption{Run time versus $\beta$}
    \label{twoalgo}
    \end{figure}
\subsection{Scalability}\label{subsection:scale}
\rev{Since IAO and IAO-DS achieve the same execution latency performance, in this section we compare the IAO and IAO-DS algorithms from the perspective of scalability (i.e., the change of convergence time with different factors) in large-scale numerical settings to verify that our algorithms are competent for large-scale practical deployments.} We conduct detailed evaluations as follows.

\textbf{Scale increase by the number of layers}.
$k$ is the number of logical layers of our neural network (i.e., the number of possible offloading decisions in decision space).
As we analyzed in Section \ref{subsection:complexity}, the execution time of IAO should increase linearly with $k$ since we always traverse over $k$ offloading decisions in any iteration. Fig. \ref{pic:ck} shows the practical increase with $k$ which obeys our theoretical analysis. Also, IAO-DS runs much faster than IAO, with more than 50\% convergence time reduction.

\textbf{Scale increase by the number of UEs}.
$n$ is the number of UEs. As shown in Fig. \ref{pic:cn}, the run time increases rapidly at first and however, its gradient decrease immediately after some certain boundary. This happens because when $n$ grows larger as $\beta$ stays unchanged, the average resource for each UE is getting smaller, and most UEs with no resources do not participate in resource reallocation step as mentioned in step 8-12 in Algorithm 1. That leads to a low increasing rate of run time. Also, we observe that the convergence time of IAO-DS is much smaller than that of IAO.

\textbf{Scale increase by the amount of computational resource}.
$\beta$ is the number of MCRU on the edge server, intuitively, if there are more resources to allocate, it takes more time to make a decision. In Fig. \ref{twoalgo}, we also observe that IAO-DS shows a significant reduction in run time than IAO. For IAO-DS, it achieves the fastest run time when the decremental factor $p=2$, due to the rich generated stepsizes for sufficient resource allocation profile adjustment.

\rev{Note that, similar to many existing studies, to enable tractable
analysis we consider that the edge server conducts the DNN tasks scheduling in the batch-by-batch manner (batch size can be set dynamically according to the time-varying task arriving rates). This
can be achieved via a task buffering mechanism (which is widely
used in cloud computing) in practice.

Also, our algorithm has a linear convergence time in terms of the number of device users, and hence our algorithm can work well if the set of users in the system change at a slower time-scale (e.g., seconds/minutes) than the convergence time by the algorithm (e.g., tens of milliseconds).

The algorithm design with efficient worst-case performance guarantee for the online scheduling scenarios with very fast-change changing system dynamics at the time-scale of milliseconds is much more challenging, and we will consider such challenging case in a future work.}

\section{Conclusion}\label{section:conclusion}
In this paper, we have proposed a framework to jointly address multi-user DNN partitioning and computational resource allocation. The central component of our framework is an algorithm minimizing the maximum delay among all UEs. Our algorithm is shown to achieve system optimum with polynomial time complexity, and exhibits nice resilience against estimation error.
We also validate our proposition by building a prototype in the multi-user MEC environment for the task of device-edge cooperative DNN partitioning.
By extensive experiments on real environment, we show that our framework outperforms the existing offloading schemes and scales effectively.

\rev{For the future work, we are going to consider the more challenging scenarios of the joint computing, communication and energy resource allocation for multi-user edge intelligence applications.}

\appendices
\section{Proof of Property \ref{property2}}\label{appendix:property2}
\begin{proof}\label{proof:property2}
    It follows from~\eqref{eq:M1} that
    $$T_i(s_i^*(f_i^2),f_1) \leq T_i(s_i^*(f_i^2),f_2).$$
    By the definition of $s_i^*(f_i^1)$, we have
    $$T_i(s_i^*(f_i^1),f_1) \leq T_i(s_i^*(f_i^2),f_1).$$
    Therefore we have
    $$T_i(s_i^*(f_i^1),f_1) \leq T_i(s_i^*(f_i^2),f_1) \leq T_i(s_i^*(f_i^2),f_2).$$
    Property~\ref{property2} is proved.
    \end{proof}

\section{Proof of Optimality of IAO}\label{appendix:theorem1}
\begin{proof}\label{proof:theorem1}
Assume, by contradiction, that there exists a better solution $(S',F')\triangleq (s_1',s_2',\cdots, s_n',f_1',f_2',\cdots, ,f_n')$ than the output of Alg.~\ref{alg:1} denoted by $(S,F)$, i.e.,
\begin{equation}
U(S',F')<U(S,F).\label{eq:a1}
\end{equation}
We denote $p$ as the UE with the largest latency $T_p(s_p,f_p)$ among all $T_i(s_i,f_i)$  ($\forall i \in N$). By the definition of $U(S,F)$, we have
\begin{equation}
U(S,F) = T_p(s_p,f_p).\label{eq:d1.1}
\end{equation}
By definition, we have $U(S',F')=\max_{i \in \mathcal{N}} T_i(s_i',f_i')$. It then follows from~\eqref{eq:a1} and~\eqref{eq:d1.1} that
\begin{equation}
T_p(s_p',f_p')\le U(S',F')<U(S,F)=T_p(s_p,f_p).\label{eq:a1.1}
\end{equation}
If $f_p' = f_p$, following from Alg.~\ref{alg:1} that $s_p$ is optimal w.r.t. $f_p$, it holds that $T_p(s_p',f_p')\ge T_p(s_p,f_p)$, which contradicts with~\eqref{eq:a1.1}. Therefore we have $f_p' \ne f_p$. It then follows from \eqref{eq:a1.1} and Property~\ref{property2} that
\begin{equation}
f_p' > f_p.\label{eq:a1.2}
\end{equation}
Noticing $\sum_{i \in \mathcal{N}} f_i' = \sum_{i \in \mathcal{N}} f_i = \beta$, there must exist another UE $q$ such that
\begin{equation}
    f_q' < f_q.\label{eq:a1.3}
    \end{equation}
Recall that Alg.~\ref{alg:1} terminates when all UEs are ``exhausted'', i.e.,
$$T_j(s_j^{*},f_j-1) \geq L_{max} \quad \quad \forall j \in N,$$
where $L_{max}=U(S,F)$, the largest individual latency $T(s,f)$. Recall the definition of $p$, we have
\begin{equation}
    L_{max} = T_p(s_p,f_p)=U(S,F),\label{eq:d1.2}
    \end{equation}
and
\begin{equation}
    T_j(s_j^{*},f_j-1) \geq L_{max} \quad \quad \forall j \in N.\label{eq:d1.30}
    \end{equation}
Particularly from~\eqref{eq:d1.30}, we have
\begin{equation}
    T_q(s_q,f_q-1) \geq L_{max}.\label{eq:d1.3}
    \end{equation}
Recall \eqref{eq:a1.3} that $f_q' < f_q$. For both $f_q$ and $f_q'$ being integers, we have $f_q' \leq f_q-1$.
It then follows from Property~\ref{property2} and~\eqref{eq:d1.3} that
\begin{equation}
    T_q(s_q',f_q')\geq T_q(s_q,f_q-1) \geq L_{max}.\label{eq:a1.4}
    \end{equation}
Combining~\eqref{eq:a1.4} and~\eqref{eq:d1.2} yields
$$U(S',F') \geq T_q(s_q',f_q') \geq L_{max} = T_p(s_p,f_p) = U(S,F),$$
which contradicts with the assumption~\eqref{eq:a1}. This contradiction proves the optimality of Alg.~\ref{alg:1}.
\end{proof}

\section{Proof of Proposition \ref{proposition2}}\label{appendix:proposition2}
\begin{proof}\label{proof:reduce2}
    To make the notation concise, denote
    $$(S,F)\triangleq (S(t),F(t)), \ (S',F')\triangleq (S(t+1),F(t+1)).$$
    Denote $i\triangleq \arg\max_{q\in{\mathcal{N}}} T_q(s_q,f_q)$ and $j\triangleq \arg\min_{q\in{\mathcal{N}}} T_q(s_q,f_q-1)$. We have
    \begin{equation}
        T_i(s_i,f_i)\geq T_q(s_q,f_q) \quad \forall q \in \mathcal{N},\label{eq:d2.1}
        \end{equation}
    \begin{equation}
        T_j(s_j,f_j-1)< T_q(s_q,f_q-1) \quad \forall q \in \mathcal{N}, q\ne j.\label{eq:d2.2}
        \end{equation}
    It follows from Line 23 of Alg.~\ref{alg:1} that
    \begin{equation}
        f_i'=f_i+1, f_j'=f_j-1.\label{eq:d2.25}
        \end{equation}
    Recall that the UE with the largest latency is marked exhausted, which indicates that it can not lose resource in any later iteration, we have
    \begin{equation}
        f_i<f_i'\leq f_i^*.\label{eq:d2.3}
        \end{equation}
    We then have
    \begin{equation}
    \begin{aligned}
    \Delta D_m &\triangleq D_m(t+1) - D_m(t) \\
    &= \sum_{q\in{\mathcal{N}}} |f_q'-f_q^*| - \sum_{q\in{\mathcal{N}}} |f_q-f_q^*| \\
    &= |f_i'-f_i^*|+|f_j'-f_j^*|-|f_i-f_i^*|-|f_j-f_j^*|\\
    &= |f_j'-f_j^*|-|f_j-f_j^*| -1.
    \end{aligned}\label{eq:d2.4}
    \end{equation}
    We next prove $f_j> f_j^*$. To that end, assume by contradiction that
    \begin{equation}
        f_j\leq f_j^*.\label{eq:a2}
        \end{equation}
    It follows from Property~\ref{property2} that
    \begin{equation}
        T_j(s_j^*,f_j^*-1)\leq T_j(s_j,f_j-1).\label{eq:a2.1}
        \end{equation}
    Recall $\sum_{i \in \mathcal{N}} f_i^* = \sum_{i \in \mathcal{N}} f_i = \beta$, there must exist a UE $x$ such that
    $$f_x>f_x^*,$$
    which further leads to $f_x-1 \geq f_x^*$ as they are integers.
    It then follows from Property 2 that
    \begin{equation}
        T_x(s_x,f_x-1)\leq T_x(s_x^*,f_x^*).\label{eq:a2.2}
        \end{equation}
    Recall that at an optimal solution all UEs are exhausted (otherwise we can perform adjustment to decrease the maximum delay among UEs), i.e.,
    \begin{equation}
        T_q(s_q^*,f_q^*-1) \geq  \max_{p \in \mathcal{N}} T_p(s_p^*,f_p^*) \quad \forall q \in \mathcal{N}.\label{eq:d2.4}
        \end{equation}
    It then follows from~\eqref{eq:a2.2} and~\eqref{eq:d2.4} that
    $$T_j(s_j^*,f_j^*-1) \geq \max_{p \in \mathcal{N}}T_p(s_p^*,f_p^*) \geq T_x(s_x^*,f_x^*) \geq T_x(s_x,f_x-1),$$
    which contradicts with~\eqref{eq:d2.2}. We thus have: $f_j > f_j^*$.
    Combining with~\eqref{eq:d2.25}, we have
    \begin{equation*}
    \begin{aligned}
    \Delta D_m &= |f_j'-f_j^*|-|f_j-f_j^*| -1 = -2.
    \end{aligned}
    \end{equation*}
    The proposition is thus proved.
    \end{proof}

\section{Proof of Theorem \ref{th:imp}}\label{app:t3}

    \begin{proof}
        By regarding $\tau$ units of resource as a bundle, there are at most $\left\lfloor \frac{\beta}{\tau} \right\rfloor$ bundles that can be adjusted w.r.t. stepsize $\tau$. By adapting the analysis in Sec.~\ref{subsection:complexity}, we can easily show that running Alg.~\ref{alg:1} with stepize $\tau$ needs at most $O(nk\left\lfloor \frac{\beta}{\tau} \right\rfloor)$. The time complexity of Alg.~\ref{alg:imp} can be derived by summing $O(nk\left\lfloor \frac{\beta}{\tau} \right\rfloor)$ from $\tau=p^q$ to $1$, i.e.:
        \begin{equation*}\nonumber
        \begin{aligned}
        \sum_{i=0}^q nk\left\lfloor \frac{\beta}{p^i} \right\rfloor&= nk(\lfloor \frac{\beta}{p^q} \rfloor + \lfloor \frac{\beta}{p^{q-1}} \rfloor +...+\lfloor \frac{\beta}{1} \rfloor)\\
        &\leq nk(\lfloor \frac{p^{q+1}}{p^q} \rfloor + \lfloor \frac{p^{q+1}}{p^{q-1}} \rfloor +...+\lfloor \frac{p^{q+1}}{1} \rfloor)\\
        &\leq nk(p + p^{2} +...+p^{q+1})\\
        &= nk\frac{p(p^q-1)}{p-1} \leq nk(\beta-1)\frac{p}{p-1}.\\
        \end{aligned}
        \end{equation*}
        Noting $p$ is a chosen constant, therefore the total time complexity of IAO-DS is bounded by
        $O(nk\beta)$.
        \end{proof}

\section{Proof of Theorem \ref{theorem4}}\label{app:t4}

    \begin{proof}
        We prove the theorem for Alg.~\ref{alg:1}. Alg.~\ref{alg:imp} can be handled in the same way.

        Denote the estimated latency and the actual latency of UE $i$ using the solution derived from Alg.~\ref{alg:1} as $T_i^E$ and $T_i^A$, respectively.
        Denote the actual latency and the estimated latency of UE $i$ of the optimal solution as $T_i^{A*}$ and $T_i^{E*}$, respectively. Further define the following utilities:
        \begin{eqnarray}
        U^A=\max_{i \in \mathcal{N}} \ T_i^A, \label{eq:2}
        \\
        U^{E}=\max_{i \in \mathcal{N}} \ T_i^{E}, \label{eq:3}
        \\
        U^{A*}=\max_{i \in \mathcal{N}} \ T_i^{A*}, \label{eq:4}
        \\
        U^{E*}=\max_{i \in \mathcal{N}} \ T_i^{E*}.
        \label{eq:5}
        \end{eqnarray}
        We have shown that
        \begin{eqnarray}
        U^{E} \leq U^{E*}.
        \label{eq:6}
        \end{eqnarray}

        Denote the UEs $i$, $j$, $k$, $l$ as the UEs with the largest latency to the utility of eqs.~\eqref{eq:2}-~\eqref{eq:5}, respectively, i.e.,:
        \begin{eqnarray}
        U^A=T_i^A, \ U^E=T_j^E, \ U^{A*}=T_k^{A*}, \ U^{E*}=T_l^{E*}.
        \label{eq:7}
        \end{eqnarray}
        It follows from~ (\ref{eq:epsilon}) that
        \begin{eqnarray}
        (1-\epsilon)T^{A} \leq T^{E} \leq (1+\epsilon)T^{A},
        \label{eq:8}
        \frac{T^{E}}{1+\epsilon} \leq T^{A} \leq \frac{T^{E}}{1-\epsilon}.
        \label{eq:9}
        \end{eqnarray}
        Combining eqs.~\eqref{eq:6}-~\eqref{eq:9} yields
        \begin{align}
        U^A&=T_i^A \leq \frac{T_i^{E}}{1-\epsilon} \leq \frac{U^E}{1-\epsilon} \leq \frac{U^{E*}}{1-\epsilon}=\frac{T_l^{E*}}{1-\epsilon} \nonumber \\
        &\leq \frac{(1+\epsilon)T_l^{A*}}{1-\epsilon} \leq \frac{(1+\epsilon)U^{A*}}{1-\epsilon}.
        \label{eq:10}
        \end{align}

        Armed with the above results, we can then upper-bound the relative utility loss of Alg.~\ref{alg:1} under the relative estimation error $\epsilon$ as
        $$\frac{|U^A-U^{A*}|}{U^{A*}} \leq \frac{2\epsilon}{1-\epsilon}.$$
        The theorem is proved.
        \end{proof}
%


\bibliographystyle{IEEEtran}
\bibliography{ref}

\begin{thebibliography}{10}
\providecommand{\url}[1]{#1}
\csname url@samestyle\endcsname
\providecommand{\newblock}{\relax}
\providecommand{\bibinfo}[2]{#2}
\providecommand{\BIBentrySTDinterwordspacing}{\spaceskip=0pt\relax}
\providecommand{\BIBentryALTinterwordstretchfactor}{4}
\providecommand{\BIBentryALTinterwordspacing}{\spaceskip=\fontdimen2\font plus
\BIBentryALTinterwordstretchfactor\fontdimen3\font minus
  \fontdimen4\font\relax}
\providecommand{\BIBforeignlanguage}[2]{{%
\expandafter\ifx\csname l@#1\endcsname\relax
\typeout{** WARNING: IEEEtran.bst: No hyphenation pattern has been}%
\typeout{** loaded for the language `#1'. Using the pattern for}%
\typeout{** the default language instead.}%
\else
\language=\csname l@#1\endcsname
\fi
#2}}
\providecommand{\BIBdecl}{\relax}
\BIBdecl

\bibitem{deepfacereco}
\BIBentryALTinterwordspacing
G.~Guo and N.~Zhang, ``A survey on deep learning based face recognition,''
  \emph{Comput. Vis. Image Underst.}, vol. 189, 2019. [Online]. Available:
  \url{https://doi.org/10.1016/j.cviu.2019.102805}
\BIBentrySTDinterwordspacing

\bibitem{surveillancesurvey}
\BIBentryALTinterwordspacing
G.~Sreenu and M.~A.~S. Durai, ``Intelligent video surveillance: a review
  through deep learning techniques for crowd analysis,'' \emph{J. Big Data},
  vol.~6, p.~48, 2019. [Online]. Available:
  \url{https://doi.org/10.1186/s40537-019-0212-5}
\BIBentrySTDinterwordspacing

\bibitem{nlpsurvey}
\BIBentryALTinterwordspacing
J.~Chai and A.~Li, ``Deep learning in natural language processing: {A}
  state-of-the-art survey,'' in \emph{2019 International Conference on Machine
  Learning and Cybernetics, {ICMLC} 2019, Kobe, Japan, July 7-10, 2019}.\hskip
  1em plus 0.5em minus 0.4em\relax {IEEE}, 2019, pp. 1--6. [Online]. Available:
  \url{https://doi.org/10.1109/ICMLC48188.2019.8949185}
\BIBentrySTDinterwordspacing

\bibitem{edgesurvey}
\BIBentryALTinterwordspacing
Y.~Mao, C.~You, J.~Zhang, K.~Huang, and K.~B. Letaief, ``A survey on mobile
  edge computing: The communication perspective,'' \emph{{IEEE} Communications
  Surveys and Tutorials}, vol.~19, no.~4, pp. 2322--2358, 2017. [Online].
  Available: \url{https://doi.org/10.1109/COMST.2017.2745201}
\BIBentrySTDinterwordspacing

\bibitem{iotdata}
J.~Yu, J.~Liu, R.~Zhang, L.~Chen, and S.~Zhang, ``Multi-seed group labeling in
  rfid systems,'' \emph{IEEE Transactions on Mobile Computing}, vol.~PP,
  no.~99, pp. 1--1, 2019.

\bibitem{edgeai}
\BIBentryALTinterwordspacing
Z.~Zhou, X.~Chen, E.~Li, L.~Zeng, K.~Luo, and J.~Zhang, ``Edge intelligence:
  Paving the last mile of artificial intelligence with edge computing,''
  \emph{Proceedings of the {IEEE}}, vol. 107, no.~8, pp. 1738--1762, 2019.
  [Online]. Available: \url{https://doi.org/10.1109/JPROC.2019.2918951}
\BIBentrySTDinterwordspacing

\bibitem{coin}
\BIBentryALTinterwordspacing
E.~Li, Z.~Zhou, and X.~Chen, ``Edge intelligence: On-demand deep learning model
  co-inference with device-edge synergy,'' in \emph{Proceedings of the 2018
  Workshop on Mobile Edge Communications, MECOMM@SIGCOMM 2018, Budapest,
  Hungary, August 20, 2018}.\hskip 1em plus 0.5em minus 0.4em\relax {ACM},
  2018, pp. 31--36. [Online]. Available:
  \url{https://doi.org/10.1145/3229556.3229562}
\BIBentrySTDinterwordspacing

\bibitem{ei1}
\BIBentryALTinterwordspacing
Z.~Huang, K.~Lin, and C.~Shih, ``Supporting edge intelligence in
  service-oriented smart iot applications,'' in \emph{2016 {IEEE} International
  Conference on Computer and Information Technology, {CIT} 2016, Nadi, Fiji,
  December 8-10, 2016}.\hskip 1em plus 0.5em minus 0.4em\relax {IEEE} Computer
  Society, 2016, pp. 492--499. [Online]. Available:
  \url{https://doi.org/10.1109/CIT.2016.40}
\BIBentrySTDinterwordspacing

\bibitem{ei2}
\BIBentryALTinterwordspacing
K.~Zhang, Y.~Zhu, S.~Maharjan, and Y.~Zhang, ``Edge intelligence and blockchain
  empowered 5g beyond for the industrial internet of things,'' \emph{{IEEE}
  Network}, vol.~33, no.~5, pp. 12--19, 2019. [Online]. Available:
  \url{https://doi.org/10.1109/MNET.001.1800526}
\BIBentrySTDinterwordspacing

\bibitem{ec}
\BIBentryALTinterwordspacing
W.~Shi, J.~Cao, Q.~Zhang, Y.~Li, and L.~Xu, ``Edge computing: Vision and
  challenges,'' \emph{{IEEE} Internet of Things Journal}, vol.~3, no.~5, pp.
  637--646, 2016. [Online]. Available:
  \url{https://doi.org/10.1109/JIOT.2016.2579198}
\BIBentrySTDinterwordspacing

\bibitem{nest}
\BIBentryALTinterwordspacing
X.~Dai, H.~Yin, and N.~K. Jha, ``Nest: {A} neural network synthesis tool based
  on a grow-and-prune paradigm,'' \emph{{IEEE} Trans. Computers}, vol.~68,
  no.~10, pp. 1487--1497, 2019. [Online]. Available:
  \url{https://doi.org/10.1109/TC.2019.2914438}
\BIBentrySTDinterwordspacing

\bibitem{DBLP:conf/nips/GuoYC16}
\BIBentryALTinterwordspacing
Y.~Guo, A.~Yao, and Y.~Chen, ``Dynamic network surgery for efficient dnns,'' in
  \emph{Advances in Neural Information Processing Systems 29: Annual Conference
  on Neural Information Processing Systems 2016, December 5-10, 2016,
  Barcelona, Spain}, D.~D. Lee, M.~Sugiyama, U.~von Luxburg, I.~Guyon, and
  R.~Garnett, Eds., 2016, pp. 1379--1387. [Online]. Available:
  \url{http://papers.nips.cc/paper/6165-dynamic-network-surgery-for-efficient-dnns}
\BIBentrySTDinterwordspacing

\bibitem{DBLP:journals/corr/HanMD15}
\BIBentryALTinterwordspacing
S.~Han, H.~Mao, and W.~J. Dally, ``Deep compression: Compressing deep neural
  network with pruning, trained quantization and huffman coding,'' in \emph{4th
  International Conference on Learning Representations, {ICLR} 2016, San Juan,
  Puerto Rico, May 2-4, 2016, Conference Track Proceedings}, Y.~Bengio and
  Y.~LeCun, Eds., 2016. [Online]. Available:
  \url{http://arxiv.org/abs/1510.00149}
\BIBentrySTDinterwordspacing

\bibitem{DBLP:conf/nips/HanPTD15}
\BIBentryALTinterwordspacing
S.~Han, J.~Pool, J.~Tran, and W.~J. Dally, ``Learning both weights and
  connections for efficient neural network,'' in \emph{Advances in Neural
  Information Processing Systems 28: Annual Conference on Neural Information
  Processing Systems 2015, December 7-12, 2015, Montreal, Quebec, Canada},
  C.~Cortes, N.~D. Lawrence, D.~D. Lee, M.~Sugiyama, and R.~Garnett, Eds.,
  2015, pp. 1135--1143. [Online]. Available:
  \url{http://papers.nips.cc/paper/5784-learning-both-weights-and-connections-for-efficient-neural-network}
\BIBentrySTDinterwordspacing

\bibitem{DBLP:conf/nips/CourbariauxBD15}
\BIBentryALTinterwordspacing
M.~Courbariaux, Y.~Bengio, and J.~David, ``Binaryconnect: Training deep neural
  networks with binary weights during propagations,'' in \emph{Advances in
  Neural Information Processing Systems 28: Annual Conference on Neural
  Information Processing Systems 2015, December 7-12, 2015, Montreal, Quebec,
  Canada}, C.~Cortes, N.~D. Lawrence, D.~D. Lee, M.~Sugiyama, and R.~Garnett,
  Eds., 2015, pp. 3123--3131. [Online]. Available:
  \url{http://papers.nips.cc/paper/5647-binaryconnect-training-deep-neural-networks-with-binary-weights-during-propagations}
\BIBentrySTDinterwordspacing

\bibitem{DBLP:journals/corr/CourbariauxB16}
\BIBentryALTinterwordspacing
M.~Courbariaux and Y.~Bengio, ``Binarynet: Training deep neural networks with
  weights and activations constrained to +1 or -1,'' \emph{CoRR}, vol.
  abs/1602.02830, 2016. [Online]. Available:
  \url{http://arxiv.org/abs/1602.02830}
\BIBentrySTDinterwordspacing

\bibitem{DBLP:conf/icml/GuptaAGN15}
\BIBentryALTinterwordspacing
S.~Gupta, A.~Agrawal, K.~Gopalakrishnan, and P.~Narayanan, ``Deep learning with
  limited numerical precision,'' in \emph{Proceedings of the 32nd International
  Conference on Machine Learning, {ICML} 2015, Lille, France, 6-11 July 2015},
  ser. {JMLR} Workshop and Conference Proceedings, F.~R. Bach and D.~M. Blei,
  Eds., vol.~37.\hskip 1em plus 0.5em minus 0.4em\relax JMLR.org, 2015, pp.
  1737--1746. [Online]. Available:
  \url{http://proceedings.mlr.press/v37/gupta15.html}
\BIBentrySTDinterwordspacing

\bibitem{DBLP:conf/nips/HubaraCSEB16}
\BIBentryALTinterwordspacing
I.~Hubara, M.~Courbariaux, D.~Soudry, R.~El{-}Yaniv, and Y.~Bengio, ``Binarized
  neural networks,'' in \emph{Advances in Neural Information Processing Systems
  29: Annual Conference on Neural Information Processing Systems 2016, December
  5-10, 2016, Barcelona, Spain}, D.~D. Lee, M.~Sugiyama, U.~von Luxburg,
  I.~Guyon, and R.~Garnett, Eds., 2016, pp. 4107--4115. [Online]. Available:
  \url{http://papers.nips.cc/paper/6573-binarized-neural-networks}
\BIBentrySTDinterwordspacing

\bibitem{mednn}
\BIBentryALTinterwordspacing
J.~Mao, Z.~Yang, W.~Wen, C.~Wu, L.~Song, K.~W. Nixon, X.~Chen, H.~Li, and
  Y.~Chen, ``Mednn: {A} distributed mobile system with enhanced partition and
  deployment for large-scale dnns,'' in \emph{2017 {IEEE/ACM} International
  Conference on Computer-Aided Design, {ICCAD} 2017, Irvine, CA, USA, November
  13-16, 2017}, S.~Parameswaran, Ed.\hskip 1em plus 0.5em minus 0.4em\relax
  {IEEE}, 2017, pp. 751--756. [Online]. Available:
  \url{https://doi.org/10.1109/ICCAD.2017.8203852}
\BIBentrySTDinterwordspacing

\bibitem{neurosurgeon}
\BIBentryALTinterwordspacing
Y.~Kang, J.~Hauswald, C.~Gao, A.~Rovinski, T.~N. Mudge, J.~Mars, and L.~Tang,
  ``Neurosurgeon: Collaborative intelligence between the cloud and mobile
  edge,'' in \emph{Proceedings of the Twenty-Second International Conference on
  Architectural Support for Programming Languages and Operating Systems,
  {ASPLOS} 2017, Xi'an, China, April 8-12, 2017}, Y.~Chen, O.~Temam, and
  J.~Carter, Eds.\hskip 1em plus 0.5em minus 0.4em\relax {ACM}, 2017, pp.
  615--629. [Online]. Available: \url{https://doi.org/10.1145/3037697.3037698}
\BIBentrySTDinterwordspacing

\bibitem{jdnn}
\BIBentryALTinterwordspacing
A.~E. Eshratifar, M.~S. Abrishami, and M.~Pedram, ``Jointdnn: An efficient
  training and inference engine for intelligent mobile cloud computing
  services,'' \emph{CoRR}, vol. abs/1801.08618, 2018. [Online]. Available:
  \url{http://arxiv.org/abs/1801.08618}
\BIBentrySTDinterwordspacing

\bibitem{hermes}
\BIBentryALTinterwordspacing
Y.~Kao, B.~Krishnamachari, M.~Ra, and F.~Bai, ``Hermes: Latency optimal task
  assignment for resource-constrained mobile computing,'' \emph{{IEEE} Trans.
  Mob. Comput.}, vol.~16, no.~11, pp. 3056--3069, 2017. [Online]. Available:
  \url{https://doi.org/10.1109/TMC.2017.2679712}
\BIBentrySTDinterwordspacing

\bibitem{DBLP:conf/ccece/ShahzadS16}
\BIBentryALTinterwordspacing
H.~Shahzad and T.~H. Szymanski, ``A dynamic programming offloading algorithm
  for mobile cloud computing,'' in \emph{2016 {IEEE} Canadian Conference on
  Electrical and Computer Engineering, {CCECE} 2016, Vancouver, BC, Canada, May
  15-18, 2016}.\hskip 1em plus 0.5em minus 0.4em\relax {IEEE}, 2016, pp. 1--5.
  [Online]. Available: \url{https://doi.org/10.1109/CCECE.2016.7726790}
\BIBentrySTDinterwordspacing

\bibitem{energyoptimal}
\BIBentryALTinterwordspacing
Y.~Wen, W.~Zhang, and H.~Luo, ``Energy-optimal mobile application execution:
  Taming resource-poor mobile devices with cloud clones,'' in \emph{Proceedings
  of the {IEEE} {INFOCOM} 2012, Orlando, FL, USA, March 25-30, 2012}, A.~G.
  Greenberg and K.~Sohraby, Eds.\hskip 1em plus 0.5em minus 0.4em\relax {IEEE},
  2012, pp. 2716--2720. [Online]. Available:
  \url{https://doi.org/10.1109/INFCOM.2012.6195685}
\BIBentrySTDinterwordspacing

\bibitem{dyoff}
\BIBentryALTinterwordspacing
D.~Huang, P.~Wang, and D.~Niyato, ``A dynamic offloading algorithm for mobile
  computing,'' \emph{{IEEE} Trans. Wireless Communications}, vol.~11, no.~6,
  pp. 1991--1995, 2012. [Online]. Available:
  \url{https://doi.org/10.1109/TWC.2012.041912.110912}
\BIBentrySTDinterwordspacing

\bibitem{offloadinggame}
\BIBentryALTinterwordspacing
X.~Chen, ``Decentralized computation offloading game for mobile cloud
  computing,'' \emph{{IEEE} Trans. Parallel Distrib. Syst.}, vol.~26, no.~4,
  pp. 974--983, 2015. [Online]. Available:
  \url{https://doi.org/10.1109/TPDS.2014.2316834}
\BIBentrySTDinterwordspacing

\bibitem{cooperativemanage}
\BIBentryALTinterwordspacing
R.~Kaewpuang, D.~Niyato, P.~Wang, and E.~Hossain, ``A framework for cooperative
  resource management in mobile cloud computing,'' \emph{{IEEE} Journal on
  Selected Areas in Communications}, vol.~31, no.~12, pp. 2685--2700, 2013.
  [Online]. Available: \url{https://doi.org/10.1109/JSAC.2013.131209}
\BIBentrySTDinterwordspacing

\bibitem{yang}
\BIBentryALTinterwordspacing
L.~Yang, J.~Cao, Y.~Yuan, T.~Li, A.~Han, and A.~T.~S. Chan, ``A framework for
  partitioning and execution of data stream applications in mobile cloud
  computing,'' \emph{{SIGMETRICS} Performance Evaluation Review}, vol.~40,
  no.~4, pp. 23--32, 2013. [Online]. Available:
  \url{https://doi.org/10.1145/2479942.2479946}
\BIBentrySTDinterwordspacing

\bibitem{zhao}
Y.~Zhao, S.~Zhou, T.~Zhao, and Z.~Niu, ``Energy-efficient task offloading for
  multiuser mobile cloud computing,'' in \emph{2015 IEEE/CIC International
  Conference on Communications in China (ICCC)}.\hskip 1em plus 0.5em minus
  0.4em\relax IEEE, 2015, pp. 1--5.

\bibitem{jointo}
S.~Sardellitti, G.~Scutari, and S.~Barbarossa, ``Joint optimization of radio
  and computational resources for multicell mobile-edge computing,'' \emph{IEEE
  Transactions on Signal and Information Processing over Networks}, vol.~1,
  no.~2, pp. 89--103, 2015.

\bibitem{minmax}
\BIBentryALTinterwordspacing
J.~Du, L.~Zhao, J.~Feng, and X.~Chu, ``Computation offloading and resource
  allocation in mixed fog/cloud computing systems with min-max fairness
  guarantee,'' \emph{{IEEE} Trans. Communications}, vol.~66, no.~4, pp.
  1594--1608, 2018. [Online]. Available:
  \url{https://doi.org/10.1109/TCOMM.2017.2787700}
\BIBentrySTDinterwordspacing

\bibitem{nestdnn}
B.~Fang, X.~Zeng, and M.~Zhang, ``Nestdnn: Resource-aware multi-tenant
  on-device deep learning for continuous mobile vision,'' in \emph{Proceedings
  of the 24th Annual International Conference on Mobile Computing and
  Networking}.\hskip 1em plus 0.5em minus 0.4em\relax ACM, 2018, pp. 115--127.

\bibitem{docker}
\BIBentryALTinterwordspacing
C.~Anderson, ``Docker,'' \emph{{IEEE} Software}, vol.~32, no.~3, p. 102, 2015.
  [Online]. Available: \url{https://doi.org/10.1109/MS.2015.62}
\BIBentrySTDinterwordspacing

\bibitem{hutter2014algorithm}
F.~Hutter, L.~Xu, H.~H. Hoos, and K.~Leyton-Brown, ``Algorithm runtime
  prediction: Methods \& evaluation,'' \emph{Artificial Intelligence}, vol.
  206, pp. 79--111, 2014.

\bibitem{tensorflow}
\BIBentryALTinterwordspacing
M.~Abadi, A.~Agarwal, P.~Barham, and e.~a. Eugene~Brevdo, ``{TensorFlow}:
  Large-scale machine learning on heterogeneous systems,'' 2015, software
  available from tensorflow.org. [Online]. Available:
  \url{https://www.tensorflow.org/}
\BIBentrySTDinterwordspacing

\bibitem{mobilenetv2}
\BIBentryALTinterwordspacing
M.~Sandler, A.~G. Howard, M.~Zhu, A.~Zhmoginov, and L.~Chen, ``Mobilenetv2:
  Inverted residuals and linear bottlenecks,'' in \emph{2018 {IEEE} Conference
  on Computer Vision and Pattern Recognition, {CVPR} 2018, Salt Lake City, UT,
  USA, June 18-22, 2018}.\hskip 1em plus 0.5em minus 0.4em\relax {IEEE}
  Computer Society, 2018, pp. 4510--4520. [Online]. Available:
  \url{http://openaccess.thecvf.com/content\_cvpr\_2018/html/Sandler
  \_MobileNetV2\_Inverted\_Residuals\_CVPR\_2018\_paper.html}
\BIBentrySTDinterwordspacing

\bibitem{vgg19}
\BIBentryALTinterwordspacing
K.~Simonyan and A.~Zisserman, ``Very deep convolutional networks for
  large-scale image recognition,'' in \emph{3rd International Conference on
  Learning Representations, {ICLR} 2015, San Diego, CA, USA, May 7-9, 2015,
  Conference Track Proceedings}, Y.~Bengio and Y.~LeCun, Eds., 2015. [Online].
  Available: \url{http://arxiv.org/abs/1409.1556}
\BIBentrySTDinterwordspacing

\end{thebibliography}

\end{document}


\title{Joint Multi-User DNN Partitioning and Computational Resource Allocation for Edge Intelligence}

\appendices
\section{Proof of Property 2}
\begin{proof}\label{proof:property2}
    It follows from (1) that
    $$T_i(s_i^*(f_i^2),f_1) \leq T_i(s_i^*(f_i^2),f_2).$$
    By the definition of $s_i^*(f_i^1)$, we have
    $$T_i(s_i^*(f_i^1),f_1) \leq T_i(s_i^*(f_i^2),f_1).$$
    Therefore we have
    $$T_i(s_i^*(f_i^1),f_1) \leq T_i(s_i^*(f_i^2),f_1) \leq T_i(s_i^*(f_i^2),f_2).$$
    Property 2 is proved.
    \end{proof}

\section{Proof of Optimality of IAO}
\begin{proof}\label{proof:theorem1}
Assume, by contradiction, that there exists a better solution $(S',F')\triangleq (s_1',s_2',\cdots, s_n',f_1',f_2',\cdots, ,f_n')$ than the output of Alg. 1 denoted by $(S,F)$, i.e.,
\begin{equation}
U(S',F')<U(S,F).\label{eq:a1}
\end{equation}
We denote $p$ as the UE with the largest latency $T_p(s_p,f_p)$ among all $T_i(s_i,f_i)$  ($\forall i \in N$). By the definition of $U(S,F)$, we have
\begin{equation}
U(S,F) = T_p(s_p,f_p).\label{eq:d1.1}
\end{equation}
By definition, we have $U(S',F')=\max_{i \in \mathcal{N}} T_i(s_i',f_i')$. It then follows from~\eqref{eq:a1} and~\eqref{eq:d1.1} that
\begin{equation}
T_p(s_p',f_p')\le U(S',F')<U(S,F)=T_p(s_p,f_p).\label{eq:a1.1}
\end{equation}
If $f_p' = f_p$, following from Alg. 1 that $s_p$ is optimal w.r.t. $f_p$, it holds that $T_p(s_p',f_p')\ge T_p(s_p,f_p)$, which contradicts with~\eqref{eq:a1.1}. Therefore we have $f_p' \ne f_p$. It then follows from \eqref{eq:a1.1} and Property 2 that
\begin{equation}
f_p' > f_p.\label{eq:a1.2}
\end{equation}
Noticing $\sum_{i \in \mathcal{N}} f_i' = \sum_{i \in \mathcal{N}} f_i = \beta$, there must exist another UE $q$ such that
\begin{equation}
    f_q' < f_q.\label{eq:a1.3}
    \end{equation}
Recall that Alg. 1 terminates when all UEs are ``exhausted'', i.e.,
$$T_j(s_j^{*},f_j-1) \geq L_{max} \quad \quad \forall j \in N,$$
where $L_{max}=U(S,F)$, the largest individual latency $T(s,f)$. Recall the definition of $p$, we have
\begin{equation}
    L_{max} = T_p(s_p,f_p)=U(S,F),\label{eq:d1.2}
    \end{equation}
and
\begin{equation}
    T_j(s_j^{*},f_j-1) \geq L_{max} \quad \quad \forall j \in N.\label{eq:d1.30}
    \end{equation}
Particularly from~\eqref{eq:d1.30}, we have
\begin{equation}
    T_q(s_q,f_q-1) \geq L_{max}.\label{eq:d1.3}
    \end{equation}
Recall \eqref{eq:a1.3} that $f_q' < f_q$. For both $f_q$ and $f_q'$ being integers, we have $f_q' \leq f_q-1$.
It then follows from Property 2 and~\eqref{eq:d1.3} that
\begin{equation}
    T_q(s_q',f_q')\geq T_q(s_q,f_q-1) \geq L_{max}.\label{eq:a1.4}
    \end{equation}
Combining~\eqref{eq:a1.4} and~\eqref{eq:d1.2} yields
$$U(S',F') \geq T_q(s_q',f_q') \geq L_{max} = T_p(s_p,f_p) = U(S,F),$$
which contradicts with the assumption~\eqref{eq:a1}. This contradiction proves the optimality of Alg. 1.
\end{proof}

\section{Proof of Proposition 1}
\begin{proof}\label{proof:distance}
    It follows from $f_i \geq 0$, $f_i \geq 0, \ \forall i \in {\mathcal{N}}$ and $\sum_{i \in \mathcal{N}} f_i^* = \sum_{i \in \mathcal{N}} f_i = \beta$ that
    \begin{align*}
    D_m &= \sum_{i \in \mathcal{N}} |f_i-f_i^*| \leq \sum_{i \in \mathcal{N}} |f_i| + \sum_{i \in \mathcal{N}} |f_i^*| \\
    &=\sum_{i \in \mathcal{N}} f_i + \sum_{i \in \mathcal{N}} f_i^*=2\beta.
    \end{align*}
    \end{proof}

\section{Proof of Proposition 2}\label{appb}
\begin{proof}\label{proof:reduce2}
    To make the notation concise, denote
    $$(S,F)\triangleq (S(t),F(t)), \ (S',F')\triangleq (S(t+1),F(t+1)).$$
    Denote $i\triangleq \arg\max_{q\in{\mathcal{N}}} T_q(s_q,f_q)$\\
     and $j\triangleq \arg\min_{q\in{\mathcal{N}}} T_q(s_q,f_q-1)$. We have
    \begin{equation}
        T_i(s_i,f_i)\geq T_q(s_q,f_q) \quad \forall q \in \mathcal{N},\label{eq:d2.1}
        \end{equation}
    \begin{equation}
        T_j(s_j,f_j-1)< T_q(s_q,f_q-1) \quad \forall q \in \mathcal{N}, q\ne j.\label{eq:d2.2}
        \end{equation}
    It follows from Line 23 of Alg. 1 that
    \begin{equation}
        f_i'=f_i+1, f_j'=f_j-1.\label{eq:d2.25}
        \end{equation}
    Recall that the UE with the largest latency is marked exhausted, which indicates that it can not lose resource in any later iteration, we have
    \begin{equation}
        f_i<f_i'\leq f_i^*.\label{eq:d2.3}
        \end{equation}
    We then have
    \begin{equation}
    \begin{aligned}
    \Delta D_m &\triangleq D_m(t+1) - D_m(t) \\
    &= \sum_{q\in{\mathcal{N}}} |f_q'-f_q^*| - \sum_{q\in{\mathcal{N}}} |f_q-f_q^*| \\
    &= |f_i'-f_i^*|+|f_j'-f_j^*|-|f_i-f_i^*|-|f_j-f_j^*|\\
    &= |f_j'-f_j^*|-|f_j-f_j^*| -1.
    \end{aligned}\label{eq:d2.4}
    \end{equation}
    We next prove $f_j> f_j^*$. To that end, assume by contradiction that
    \begin{equation}
        f_j\leq f_j^*.\label{eq:a2}
        \end{equation}
    It follows from Property 2 that
    \begin{equation}
        T_j(s_j^*,f_j^*-1)\leq T_j(s_j,f_j-1).\label{eq:a2.1}
        \end{equation}
    Recall $\sum_{i \in \mathcal{N}} f_i^* = \sum_{i \in \mathcal{N}} f_i = \beta$, there must exist a UE $x$ such that
    $$f_x>f_x^*,$$
    which further leads to $f_x-1 \geq f_x^*$ as they are integers.
    It then follows from Property 2 that
    \begin{equation}
        T_x(s_x,f_x-1)\leq T_x(s_x^*,f_x^*).\label{eq:a2.2}
        \end{equation}
    Recall that at an optimal solution all UEs are exhausted (otherwise we can perform adjustment to decrease the maximum delay among UEs), i.e.,
    \begin{equation}
        T_q(s_q^*,f_q^*-1) \geq  \max_{p \in \mathcal{N}} T_p(s_p^*,f_p^*) \quad \forall q \in \mathcal{N}.\label{eq:d2.4}
        \end{equation}
    It then follows from~\eqref{eq:a2.2} and~\eqref{eq:d2.4} that
    $$T_j(s_j^*,f_j^*-1) \geq \max_{p \in \mathcal{N}}T_p(s_p^*,f_p^*) \geq T_x(s_x^*,f_x^*) \geq T_x(s_x,f_x-1),$$
    which contradicts with~\eqref{eq:d2.2}. We thus have: $f_j > f_j^*$.
    Combining with~\eqref{eq:d2.25}, we have
    \begin{equation*}
    \begin{aligned}
    \Delta D_m &= |f_j'-f_j^*|-|f_j-f_j^*| -1 = -2.
    \end{aligned}
    \end{equation*}
    The proposition is thus proved.
    \end{proof}

\section{Proof of Theorem 2}
    \begin{proof}\label{proof:theorem2}
        The first \textbf{for} loop takes $O(nk)$ time. Each iteration of the main loop takes $O(nk)$ time, dominated by the second \textbf{for} loop. It follows from Proposition 1 that any initial allocation profile $(S(0),F(0))$ is at most $2\beta$ from the optimal one. It follows from Proposition 2 that each iteration shrinks the difference by $2$. Hence, Alg. 1 terminates in at most $\beta$ iterations. Therefore, the total time complexity of Alg. 1 is $O(nk\beta)$.
        \end{proof}

\section{Proof of Theorem 3}\label{appc}

    \begin{proof}
        By regarding $\tau$ units of resource as a bundle, there are at most $\left\lfloor \frac{\beta}{\tau} \right\rfloor$ bundles that can be adjusted w.r.t. stepsize $\tau$. By adapting the analysis in Sec. III-D, we can easily show that running Alg. 1 with stepize $\tau$ needs at most $O(nk\left\lfloor \frac{\beta}{\tau} \right\rfloor)$. The time complexity of Alg. 2 can be derived by summing $O(nk\left\lfloor \frac{\beta}{\tau} \right\rfloor)$ from $\tau=p^q$ to $1$, i.e.:
        \begin{equation*}\nonumber
        \begin{aligned}
        \sum_{i=0}^q nk\left\lfloor \frac{\beta}{p^i} \right\rfloor&= nk(\lfloor \frac{\beta}{p^q} \rfloor + \lfloor \frac{\beta}{p^{q-1}} \rfloor +...+\lfloor \frac{\beta}{1} \rfloor)\\
        &\leq nk(\lfloor \frac{p^{q+1}}{p^q} \rfloor + \lfloor \frac{p^{q+1}}{p^{q-1}} \rfloor +...+\lfloor \frac{p^{q+1}}{1} \rfloor)\\
        &\leq nk(p + p^{2} +...+p^{q+1})\\
        &= nk\frac{p(p^q-1)}{p-1} \leq nk(\beta-1)\frac{p}{p-1}.\\
        \end{aligned}
        \end{equation*}
        Noting $p$ is a chosen constant, therefore the total time complexity of IAO-DS is bounded by
        $O(nk\beta)$.
        \end{proof}

\section{Proof of Theorem 4}

    \begin{proof}
        We prove the theorem for Alg. 1. Alg. 2 can be handled in the same way.
        
        Denote the estimated latency and the actual latency of UE $i$ using the solution derived from Alg. 1 as $T_i^E$ and $T_i^A$, respectively.
        Denote the actual latency and the estimated latency of UE $i$ of the optimal solution as $T_i^{A*}$ and $T_i^{E*}$, respectively. Further define the following utilities:
        \begin{eqnarray}
        U^A=\max_{i \in \mathcal{N}} \ T_i^A, \label{eq:2}
        \\
        U^{E}=\max_{i \in \mathcal{N}} \ T_i^{E}, \label{eq:3}
        \\
        U^{A*}=\max_{i \in \mathcal{N}} \ T_i^{A*}, \label{eq:4}
        \\
        U^{E*}=\max_{i \in \mathcal{N}} \ T_i^{E*}.
        \label{eq:5}
        \end{eqnarray}
        We have shown that
        \begin{eqnarray}
        U^{E} \leq U^{E*}.
        \label{eq:6}
        \end{eqnarray}
        
        Denote the UEs $i$, $j$, $k$, $l$ as the UEs with the largest latency to the utility of eqs.~\eqref{eq:2}-~\eqref{eq:5}, respectively, i.e.,:
        \begin{eqnarray}
        U^A=T_i^A, \ U^E=T_j^E, \ U^{A*}=T_k^{A*}, \ U^{E*}=T_l^{E*}.
        \label{eq:7}
        \end{eqnarray}
        It follows from~ (5) that
        \begin{eqnarray}
        (1-\epsilon)T^{A} \leq T^{E} \leq (1+\epsilon)T^{A},
        \label{eq:8}
        \frac{T^{E}}{1+\epsilon} \leq T^{A} \leq \frac{T^{E}}{1-\epsilon}.
        \label{eq:9}
        \end{eqnarray}
        Combining eqs.~\eqref{eq:6}-~\eqref{eq:9} yields
        \begin{align}
        U^A&=T_i^A \leq \frac{T_i^{E}}{1-\epsilon} \leq \frac{U^E}{1-\epsilon} \leq \frac{U^{E*}}{1-\epsilon}=\frac{T_l^{E*}}{1-\epsilon} \nonumber \\
        &\leq \frac{(1+\epsilon)T_l^{A*}}{1-\epsilon} \leq \frac{(1+\epsilon)U^{A*}}{1-\epsilon}.
        \label{eq:10}
        \end{align}
        
        Armed with the above results, we can then upper-bound the relative utility loss of Alg. 1 under the relative estimation error $\epsilon$ as
        $$\frac{|U^A-U^{A*}|}{U^{A*}} \leq \frac{2\epsilon}{1-\epsilon}.$$
        The theorem is proved.
        \end{proof}